\begin{document}
\newtheorem{theorem}{\emph{\underline{Theorem}}}
\newtheorem{acknowledgement}[theorem]{Acknowledgement}
\renewcommand{\algorithmicensure}{ \textbf{repeat:}}
\newtheorem{axiom}[theorem]{Axiom}
\newtheorem{case}[theorem]{Case}
\newtheorem{claim}[theorem]{Claim}
\newtheorem{conclusion}[theorem]{Conclusion}
\newtheorem{condition}[theorem]{Condition}
\newtheorem{conjecture}[theorem]{\emph{\underline{Conjecture}}}
\newtheorem{criterion}[theorem]{Criterion}
\newtheorem{definition}{\emph{\underline{Definition}}}
\newtheorem{exercise}[theorem]{Exercise}
\newtheorem{lemma}{\emph{\underline{Lemma}}}
\newtheorem{corollary}{\emph{\underline{Corollary}}}
\newtheorem{notation}[theorem]{Notation}
\newtheorem{problem}[theorem]{Problem}
\newtheorem{proposition}{\emph{\underline{Proposition}}}
\newtheorem{solution}[theorem]{Solution}
\newtheorem{summary}[theorem]{Summary}
\newtheorem{assumption}{Assumption}
\newtheorem{example}{\bf \emph{\underline{Example}}}
\newtheorem{remark}{\bf \emph{\underline{Remark}}}
\newtheorem{property}{\emph{\underline{Property}}}

\def\qed{$\Box$}
\def\QED{\mbox{\phantom{m}}\nolinebreak\hfill$\,\Box$}
\def\proof{\noindent{\emph{Proof:} }}
\def\poof{\noindent{\emph{Sketch of Proof:} }}
\def
\endproof{\hspace*{\fill}~\qed
\par
\endtrivlist\unskip}
\def\endproof{\hspace*{\fill}~\qed\par\endtrivlist\vskip3pt}

\def\E{\mathsf{E}}
\def\eps{\varepsilon}
\def\Lsp{{\boldsymbol L}}
\def\Bsp{{\boldsymbol B}}
\def\lsp{{\boldsymbol\ell}}
\def\Ltsp{{\Lsp^2}}
\def\Lpsp{{\Lsp^p}}
\def\Linsp{{\Lsp^{\infty}}}
\def\LtR{{\Lsp^2(\Rst)}}
\def\ltZ{{\lsp^2(\Zst)}}
\def\ltsp{{\lsp^2}}
\def\ltZt{{\lsp^2(\Zst^{2})}}
\def\ninN{{n{\in}\Nst}}
\def\oh{{\frac{1}{2}}}
\def\grass{{\cal G}}
\def\ord{{\cal O}}
\def\dist{{d_G}}
\def\conj#1{{\overline#1}}
\def\ntoinf{{n \rightarrow \infty }}
\def\toinf{{\rightarrow \infty }}
\def\tozero{{\rightarrow 0 }}
\def\trace{{\operatorname{trace}}}
\def\ord{{\cal O}}
\def\UU{{\cal U}}
\def\rank{{\operatorname{rank}}}
\def\acos{{\operatorname{acos}}}

\def\SINR{\mathsf{SINR}}
\def\SNR{\mathsf{SNR}}
\def\SIR{\mathsf{SIR}}
\def\tSIR{\widetilde{\mathsf{SIR}}}
\def\Ei{\mathsf{Ei}}
\def\l{\left}
\def\r{\right}
\def\({\left(}
\def\){\right)}
\def\lb{\left\{}
\def\rb{\right\}}

\setcounter{page}{1}

\newcommand{\eref}[1]{(\ref{#1})}
\newcommand{\fig}[1]{Fig.\ \ref{#1}}

\def\bydef{:=}
\def\ba{{\mathbf{a}}}
\def\bb{{\mathbf{b}}}
\def\bc{{\mathbf{c}}}
\def\bd{{\mathbf{d}}}
\def\bee{{\mathbf{e}}}
\def\bff{{\mathbf{f}}}
\def\bg{{\mathbf{g}}}
\def\bh{{\mathbf{h}}}
\def\bi{{\mathbf{i}}}
\def\bj{{\mathbf{j}}}
\def\bk{{\mathbf{k}}}
\def\bl{{\mathbf{l}}}
\def\bm{{\mathbf{m}}}
\def\bn{{\mathbf{n}}}
\def\bo{{\mathbf{o}}}
\def\bp{{\mathbf{p}}}
\def\bq{{\mathbf{q}}}
\def\br{{\mathbf{r}}}
\def\bs{{\mathbf{s}}}
\def\bt{{\mathbf{t}}}
\def\bu{{\mathbf{u}}}
\def\bv{{\mathbf{v}}}
\def\bw{{\mathbf{w}}}
\def\bx{{\mathbf{x}}}
\def\by{{\mathbf{y}}}
\def\bz{{\mathbf{z}}}
\def\b0{{\mathbf{0}}}

\def\bA{{\mathbf{A}}}
\def\bB{{\mathbf{B}}}
\def\bC{{\mathbf{C}}}
\def\bD{{\mathbf{D}}}
\def\bE{{\mathbf{E}}}
\def\bF{{\mathbf{F}}}
\def\bG{{\mathbf{G}}}
\def\bH{{\mathbf{H}}}
\def\bI{{\mathbf{I}}}
\def\bJ{{\mathbf{J}}}
\def\bK{{\mathbf{K}}}
\def\bL{{\mathbf{L}}}
\def\bM{{\mathbf{M}}}
\def\bN{{\mathbf{N}}}
\def\bO{{\mathbf{O}}}
\def\bP{{\mathbf{P}}}
\def\bQ{{\mathbf{Q}}}
\def\bR{{\mathbf{R}}}
\def\bS{{\mathbf{S}}}
\def\bT{{\mathbf{T}}}
\def\bU{{\mathbf{U}}}
\def\bV{{\mathbf{V}}}
\def\bW{{\mathbf{W}}}
\def\bX{{\mathbf{X}}}
\def\bY{{\mathbf{Y}}}
\def\bZ{{\mathbf{Z}}}

\def\mA{{\mathbb{A}}}
\def\mB{{\mathbb{B}}}
\def\mC{{\mathbb{C}}}
\def\mD{{\mathbb{D}}}
\def\mE{{\mathbb{E}}}
\def\mF{{\mathbb{F}}}
\def\mG{{\mathbb{G}}}
\def\mH{{\mathbb{H}}}
\def\mI{{\mathbb{I}}}
\def\mJ{{\mathbb{J}}}
\def\mK{{\mathbb{K}}}
\def\mL{{\mathbb{L}}}
\def\mM{{\mathbb{M}}}
\def\mN{{\mathbb{N}}}
\def\mO{{\mathbb{O}}}
\def\mP{{\mathbb{P}}}
\def\mQ{{\mathbb{Q}}}
\def\mR{{\mathbb{R}}}
\def\mS{{\mathbb{S}}}
\def\mT{{\mathbb{T}}}
\def\mU{{\mathbb{U}}}
\def\mV{{\mathbb{V}}}
\def\mW{{\mathbb{W}}}
\def\mX{{\mathbb{X}}}
\def\mY{{\mathbb{Y}}}
\def\mZ{{\mathbb{Z}}}

\def\cA{\mathcal{A}}
\def\cB{\mathcal{B}}
\def\cC{\mathcal{C}}
\def\cD{\mathcal{D}}
\def\cE{\mathcal{E}}
\def\cF{\mathcal{F}}
\def\cG{\mathcal{G}}
\def\cH{\mathcal{H}}
\def\cI{\mathcal{I}}
\def\cJ{\mathcal{J}}
\def\cK{\mathcal{K}}
\def\cL{\mathcal{L}}
\def\cM{\mathcal{M}}
\def\cN{\mathcal{N}}
\def\cO{\mathcal{O}}
\def\cP{\mathcal{P}}
\def\cQ{\mathcal{Q}}
\def\cR{\mathcal{R}}
\def\cS{\mathcal{S}}
\def\cT{\mathcal{T}}
\def\cU{\mathcal{U}}
\def\cV{\mathcal{V}}
\def\cW{\mathcal{W}}
\def\cX{\mathcal{X}}
\def\cY{\mathcal{Y}}
\def\cZ{\mathcal{Z}}
\def\cd{\mathcal{d}}
\def\Mt{M_{t}}
\def\Mr{M_{r}}
\def\O{\Omega_{M_{t}}}
\newcommand{\figref}[1]{{Fig.}~\ref{#1}}
\newcommand{\tabref}[1]{{Table}~\ref{#1}}

\newcommand{\var}{\mathsf{var}}
\newcommand{\fb}{\tx{fb}}
\newcommand{\nf}{\tx{nf}}
\newcommand{\BC}{\tx{(bc)}}
\newcommand{\MAC}{\tx{(mac)}}
\newcommand{\Pout}{p_{\mathsf{out}}}
\newcommand{\nnn}{\nn\\}
\newcommand{\FB}{\tx{FB}}
\newcommand{\TX}{\tx{TX}}
\newcommand{\RX}{\tx{RX}}
\renewcommand{\mod}{\tx{mod}}
\newcommand{\m}[1]{\mathbf{#1}}
\newcommand{\td}[1]{\tilde{#1}}
\newcommand{\sbf}[1]{\scriptsize{\textbf{#1}}}
\newcommand{\stxt}[1]{\scriptsize{\textrm{#1}}}
\newcommand{\suml}[2]{\sum\limits_{#1}^{#2}}
\newcommand{\sumlk}{\sum\limits_{k=0}^{K-1}}
\newcommand{\eqhsp}{\hspace{10 pt}}
\newcommand{\tx}[1]{\texttt{#1}}
\newcommand{\Hz}{\ \tx{Hz}}
\newcommand{\sinc}{\tx{sinc}}
\newcommand{\tr}{\mathrm{tr}}
\newcommand{\diag}{\mathrm{diag}}
\newcommand{\MAI}{\tx{MAI}}
\newcommand{\ISI}{\tx{ISI}}
\newcommand{\IBI}{\tx{IBI}}
\newcommand{\CN}{\tx{CN}}
\newcommand{\CP}{\tx{CP}}
\newcommand{\ZP}{\tx{ZP}}
\newcommand{\ZF}{\tx{ZF}}
\newcommand{\SP}{\tx{SP}}
\newcommand{\MMSE}{\tx{MMSE}}
\newcommand{\MINF}{\tx{MINF}}
\newcommand{\RC}{\tx{MP}}
\newcommand{\MBER}{\tx{MBER}}
\newcommand{\MSNR}{\tx{MSNR}}
\newcommand{\MCAP}{\tx{MCAP}}
\newcommand{\vol}{\tx{vol}}
\newcommand{\ah}{\hat{g}}
\newcommand{\tg}{\tilde{g}}
\newcommand{\teta}{\tilde{\eta}}
\newcommand{\heta}{\hat{\eta}}
\newcommand{\uh}{\m{\hat{s}}}
\newcommand{\eh}{\m{\hat{\eta}}}
\newcommand{\hv}{\m{h}}
\newcommand{\hh}{\m{\hat{h}}}
\newcommand{\Po}{P_{\mathrm{out}}}
\newcommand{\Poh}{\hat{P}_{\mathrm{out}}}
\newcommand{\Ph}{\hat{\gamma}}
\newcommand{\mat}[1]{\begin{matrix}#1\end{matrix}}
\newcommand{\ud}{^{\dagger}}
\newcommand{\C}{\mathcal{C}}
\newcommand{\nn}{\nonumber}
\newcommand{\nInf}{U\rightarrow \infty}

\title{Near-field Beam-focusing Pattern under \\ Discrete Phase Shifters}

\author{Haodong Zhang, Changsheng You,~\IEEEmembership{Member,~IEEE}, Cong Zhou
\thanks{Part of this work will be presented at the IEEE WCNC 2025, Milan, Italy, March 24–27, 2025 \cite{zhang2025near}.}
\thanks{Haodong Zhang and Changsheng You are with the Department of Electronic and Electrical Engineering, Southern University of Science and Technology, Shenzhen 518055, China. (e-mails: zhanghd2021@mail.sustech.edu.cn; youcs@sustech.edu.cn). }
\thanks{Cong Zhou is with the School of Electronic and Information Engineering, Harbin Institute of Technology, Harbin, 150001, China, and also with the Department of Electrical and Electronic Engineering, Southern University of Science and Technology, Shenzhen 518055, China  (e-mail:  zhoucong@stu.hit.edu.cn).}
\thanks{\emph{(Corresponding author: Changsheng You.)}}
}

\maketitle

\begin{abstract}
Extremely large-scale arrays (XL-arrays) have emerged as a promising technology for enabling \emph{near-field} communications in future wireless systems.
However, the huge number of antennas deployed pose demanding challenges on the hardware cost and power consumption, especially when the antennas employ high-resolution phase shifters (PSs). 
To address this issue, in this paper, we consider low-resolution \emph{discrete} PSs at the XL-array which are practically more energy efficient, and investigate the impact of PS resolution on the near-field \emph{beam-focusing} effect. 
To this end, we propose a new \emph{Fourier series expansion} method to efficiently tackle the difficulty in characterizing the beam pattern properties under phase quantization. 
Interestingly, we analytically show, for the first time, that 1) discrete PSs introduce additional \emph{grating} lobes; 2) the main lobe still exhibits the beam-focusing property with its beam power increasing with PS resolution; and 3) there are two types of grating lobes, featured by the beam-focusing and beam-steering properties, respectively. 
In addition, we provide intuitive understanding for the appearance of grating lobes under discrete PSs from an \emph{array-of-subarrays} perspective.
Finally, numerical results demonstrate that the grating lobes generally degrade communication rate performance. 
However, a low-resolution of 3-bit PSs can achieve similar beam pattern and rate performance with the continuous PS counterpart, while it attains much higher energy efficiency. 

\end{abstract}

\begin{IEEEkeywords}
Extremely large-scale array, near-field communications, discrete/low-resolution phase shifter, beam focusing.
\end{IEEEkeywords}

\section{Introduction}
Extremely large-scale arrays (XL-arrays) have emerged as a promising technology to enhance the spectral efficiency and spatial resolution in future wireless systems~\cite{lu2024tutorial,you2024next, cui2022near,cong2024near}. In addition, via increasing the number of antennas by another-of-magnitude (say, from 64 to 512), XL-arrays greatly expand the Rayleigh distance, which specifies the boundary between the near- and far-fields. This thus renders the communication users in future wireless systems are more likely to be located in the (radiative) \emph{near-field}, leading to a new research area, called near-field multiple-input-multiple-output (MIMO) communications~\cite{liu2023near}.

Compared with conventional far-field communications, near-field communications are featured by \emph{spherical} (instead of planar) wavefronts, where the channel steering vector is jointly determined by the user angle and range, which brings both opportunities and challenges. 
Interestingly, under the spherical wavefronts and the ideal assumption of continuous phase shifters (PSs), near-field beamforming based on maximum ratio transmission (MRT) exhibits the appealing \emph{beam-focusing} effect, which makes it possible to concentrate the beam power at specific locations/regions.
This is in contrast to the far-field case where the beamforming based on MRT can only steer the beam power along a certain angle. To study the beam-focusing performance, it was shown in~\cite{cui2022near} that, the beam-width of the near-field beam is inversely proportional to the array aperture (equivalently number of antennas). 
Moreover, a new metric, called \emph{beam-depth}, is further defined to characterize the distance region where the beam power is primarily concentrated, which is jointly determined by the user angle, range, wavelength and the antenna number~\cite{cui2022near,liu2023near}. 
Generally speaking, a larger number of antennas leads to a smaller beam-depth, thus enhancing the beam-focusing performance. 
In addition, for wideband communication systems, it was shown in~\cite{cui2024near} that, the MRT-based beamformer for one subcarrier can focus the beam power at a specific location only in this subcarrier, but it will be split into different locations for other subcarriers.
Besides the uniform linear array with half-wavelength inter-antenna spacing, near-field beamforming pattern has also been studied in other array configurations. For example, the authors in~\cite{zhou2024sparse} proposed to employ linear sparse arrays (LSAs) to enable near-field communication with relatively small number of antennas. 
They revealed that, in addition to the main lobe, LSAs introduce additional grating lobes, both of which exhibit the near-field beam-focusing effect.
Such beam pattern analysis is further extended to other non-uniform sparse arrays, such as modular arrays~\cite{li2024multi} and coprime arrays~\cite{zhou2024sparse,li2024sparse}.

On the other hand, the near-field beam pattern analysis motivates follow-up research to exploit the beam-focusing property to enhance the communication performance.
Specifically, apart from the angle domain control, the beam-focusing effect provides an additional degree-of-freedom (DoF) to flexibly control the beam power in the range domain \cite{wang2023extremely,liu2022deep}. This thus can be leveraged to enhance the received power at the target user, reduce inter-user interference (IUI), and improve multi-user access ability \cite{liu2023near,wu2023multiple}.
Moreover, other applications can also be enabled/enhanced by exploiting near-field beam-focusing effect. 
For example, wireless power transfer (WPT) efficiency can be greatly improved \cite{zhang2024swipt}, physical-layer security can be achieved even at the same angle \cite{zhang2024performance}, and target localization (including both angle and range estimation) can be achieved at a single XL-array without relying on multiple anchor nodes and inter-node synchronization \cite{pan2023ris,wang2024near}.

However, it is worth noting that the large number of antennas in XL-arrays also increase the hardware cost and power consumption, when fully digital beamforming architectures are considered.
This issue can be effectively addressed by employing hybrid beamforming structures, which include a high-dimensional analog beamformer and a low-dimensional digital beamformer. 
As for analog beamforming, continuous PSs are usually assumed, which can be approximately achieved in practice by using high-resolution (e.g., 4 bit) PSs. Nevertheless, this inevitably incurs high energy and hardware cost for XL-arrays.
For example, in millimeter-wave (mmWave) bands, the power consumption of a 4-bit PS is up to 45-108 mW in 60 GHz, which is much larger than those of low-resolution PSs, such as 5 mW for 1-bit PSs~\cite{mendez2016hybrid}. 
Therefore, one practical and more energy-efficient design for XL-arrays with half-wavelength inter-antenna spacing is employing \emph{low-resolution} PSs for analog beamforming.

Although uncharted in near-field communications, low-resolution PSs have been studied in the literature for far-field communications. 
For example, the authors in \cite{wang2018hybrid,sohrabi2016hybrid} proposed to optimize discrete phases for maximizing the far-field communication performance by using optimization theory and techniques. 
To reduce the computational complexity, an alternative method is by first designing the hybrid beamformer under the assumption of continuous PSs and then quantizing the obtained phase values into their discrete versions based on e.g., nearest neighborhood (NN) criterion \cite{liang2014low}. However, for near-field communications, especially the beam pattern analysis, low-resolution PSs introduce several new challenges. 
\begin{itemize}
\item First, it is unclear whether discrete PSs will affect the beam-focusing effect, which is conventionally revealed under the assumption of continuous PSs. 
\item Second, the phase quantization under discrete PSs renders the existing beam pattern analysis method inapplicable, since there generally lacks a closed-form expression for the analog beamformer under discrete PSs. 
\end{itemize}
These thus motivate the current work as the first attempt (to our best knowledge) to resolve the above issues for analytically characterizing the near-field 
beam pattern under discrete PSs.

\vspace{-7pt}
\subsection{Contributions, Organization, and Notations}
In this paper, we consider a multi-user near-field communication system as shown in Fig.~\ref{fig_1}, where an XL-array base station (BS) employs a hybrid beamforming architecture with discrete PSs. 
For ease of beam pattern analysis and obtaining useful insights, we consider 1) the two-stage hybrid beamforming method with its analog beamforming designed based on MRT, and 2) the NN phase quantization method under the constraints of discrete PSs, while the extensions to other scenarios will be discussed later.
The main contributions of this paper are summarized as follows.
\begin{itemize}
\item First, we propose a new and efficient Fourier series expansion (FSE) method to tackle the difficulty in characterizing
the beam pattern properties under phase quantization.
Specifically, by exploiting the fact that the near-field beam pattern under discrete PSs is a periodic function of the quantized phase, we re-express the beam pattern function into a more tractable summation form based on FSE, where only a few Fourier coefficients dominates and their values are fundamentally determined by the PS resolution.

\item Next, we analytically characterize the near-field beam pattern properties according to the dominant Fourier coefficients.
In particular, we show that discrete PSs introduce additional grating lobes, which are undesired in multi-user communications as they may cause IUI.
Second, it is revealed that the main lobe still exhibits the beam-focusing property.
Its beam power is determined by the Fourier coefficients which increases with PS resolution, while its beam-width and beam-depth are identical to those of continuous PSs.
In addition, there are two distinct types of grating lobes, individually characterized by the beam-focusing and beam-steering behaviors, respectively.
In addition, we provide intuitive understanding for the appearance of near-field grating lobes under discrete PSs from an \emph{array-of-subarrays} perspective, where the XL-array under discrete PSs is shown to have similar beam pattern with sparse arrays, thus resulting in grating lobes.

\item Finally, numerical results are presented to demonstrate the rate performance and energy efficiency (EE) of discrete PSs in the near-field. 
It is shown that, grating lobes generally degrade communication rate performance under MRT beamforming, which can be further mitigated by efficient digital beamforming. 
In addition, the rate gap between low-resolution PSs and continuous PSs gradually diminishes with the number of antennas, due to the increased beam-focusing gain and suppressed grating-lobe interference.
Moreover, low-resolution PSs exhibit much higher EE than that of high-resolution PSs, because they can significantly reduce the power consumption while at a certain rate loss.
In particular, it is shown that 3-bit PSs can achieve similar rate performance with the continuous PS counterpart, while achieving much higher EE.
\end{itemize}

\vspace{-0.5pt}
\textit{Organization:} The remainder of this paper is organized as follows. Section II presents the system model of XL-array under discrete PSs in the near-field. In Section III, we propose an FSE to facilitate the beam pattern analysis under $B$-bit PSs in the near-field.
The near-field beam pattern properties for discrete PSs are presented in Section IV, followed by discussions for special cases and intuitive explanation. 
Last, we present numerical results in Section VI and make concluding remarks in Section VII.

\vspace{-0.5pt}
\textit{Notations:} Lower-case and upper-case boldface letters represent vectors and matrices, respectively. Upper-case calligraphic letters (e.g., $\mathcal{N}$) denote discrete and finite sets. The superscripts $(\cdot)^T$ and $(\cdot)^H$ stand for the transpose and Hermitian transpose operations, respectively. $(x)^+$ denotes $\max \{0,x\}$. $\otimes$ is the Kronecker product. Moreover, $|\cdot|$ indicates the absolute value for a real number and the cardinality for a set. $\mathbb{Z}$ is the integer set and $\mathbb{C}^{M\times N}$ represents the space of $M\times N$ complex-valued matrices. $\mathcal{CN}(0,\sigma^2)$ is the distribution of a circularly symmetric complex Gaussian (CSCG) random vector with mean 0 and variance $\sigma^2$.

\section{System Model}
As shown in Fig.~\ref{fig_1}, we consider a downlink multi-user communication system in the narrow-band, where an XL-array BS equipped with an uniform linear array (ULA) serves $M$ single-antenna users. 
\vspace{-4pt}
\subsection{Channel Model}
Without loss of generality, the XL-array is placed at the $y$-axis and centered at the origin. 
Let $N = 2\tilde{N} -1 $ denote the number of antennas deployed at the XL-array. As such, the $n$-th antenna is located at $(0,nd)$, $n\in\mathcal{N}\triangleq \{ -\tilde{N}, -\tilde{N}+1,\cdots,0,\cdots,\tilde{N}\}$, where $d=\lambda/2$ is the inter-antenna spacing with $\lambda$ denoting the carrier wavelength.
In this paper, we consider the case where all the users are located in the Fresnel region, for which the distance from the XL-array center to user $m \in \mathcal{M}\triangleq\{1,2,\cdots,M\}$, denoted by $r_{m}$, satisfies $Z_{\rm F}=\max\{ d_{\rm R},1.2D\} \leq r_{m} \leq Z_{\rm R}=\frac{2D^2}{\lambda}$~\cite{zhou2024multi}. Herein, $Z_{\rm F}$ and $Z_{\rm R}$ denote the Fresnel distance and the Rayleigh distance, respectively, with $D$ denoting the XL-array aperture.
In particular, $d_{\rm R}$ represents the boundary between the reactive and radiative near-field regions, which is verified to be several wavelengths~\cite{ouyang2024impact}.
Thus, the Fresnel distance can be further obtained as $Z_{\rm F}=1.2D$.

Let $\mathbf{h}_{m}^H$ denote the channel from the XL-array to user ${m}$. In the Fresnel region with $r_{m}\geq Z_{\rm F}$, the channel amplitude variations across the antennas are generally negligible, while the phase variations are non-linear.
As such, under the general multi-path channel model,  $\mathbf{h}_{m}^H$ can be modeled as
\begin{equation}
\mathbf{h}_{m}^H\!=\!\sqrt{N}h_{m}\mathbf{b}^H(\theta_{m},r_{m})\!\!+\!\!\sqrt{\frac NL}\sum_{\ell=1}^{L_{m}}h_{m}^{(\ell)}\mathbf{b}^H(\theta_{m}^{(\ell)},r_{m}^{(\ell)}),\!\!
\end{equation}
which includes one line-of-sight (LoS) path and $L_{m}$ non-LoS (NLoS) paths. Herein, $\theta_{m} \in (-\frac{\pi}{2},\frac{\pi}{2})$ denotes the physical LoS angle-of-departure (AoD) from the BS to user $m$, and $h_{m} = \frac{\sqrt{\beta}}{r_m}e^{-\jmath\frac{2\pi}{\lambda} r_{m}}$ denotes the complex-valued channel
gain with $\beta$ representing the reference channel gain at a range of 1~m.  
The parameters $h_{m}^{(\ell)}$, $\theta_{m}^{(\ell)}$ and $r_{m}^{(\ell)}$ denote the complex channel gain, physical
angle and range of the $\ell$-th path, respectively. 
In this paper, we consider high-frequency bands such as mmWave and even terahertz (THz), for which the NLoS paths have negligible power due to  severe path-loss and shadowing \cite{8901159}. 
As such, the channel $\mathbf{h}_{m}^H$ can be approximated by its LoS component as
$
\mathbf{h}_{m}^H\approx \sqrt{N}h_{m}\mathbf{b}^H(\theta_{m},r_{m}),
$ \cite{cui2022channel}
where $\mathbf{b}^H(\theta_{m},r_{m})$ is the near-field steering vector under the spherical wavefront, which is given by 
\begin{equation}
\!\!\!\!\!\!\mathbf{b}\!\left(\theta_{m},r_{m}\right)\!\!=\!\!\frac{1}{\sqrt{N}}\!\bigg[e^{-\jmath \frac{2\pi}{\lambda} (r_{{m},-\tilde{N}}-r_{m})}\!,\!\cdots\!,e^{-\jmath \frac{2\pi}{\lambda} (r_{{m},\tilde{N}}-r_{m})}\!\bigg]^T\!\!\!\!.\!\!\!
\label{eq4}
\end{equation}
Based on the Fresnel approximation, the distance between user $m$ and the $n$-th XL-array antenna, denoted by $r_{{m},n}$, can be approximated~\cite{cui2022channel}
\begin{equation}
\begin{aligned}
r_{{m},n}&=\sqrt{r_{m}^2+n^2d^2-2r_{m}nd\sin\theta_{m} } \\
&\approx r_{m}-nd\sin\theta_{m}+\frac{n^2d^2\cos^2\theta_{m}}{2r_{m}},
\label{eq5}
\end{aligned}
\end{equation}

\subsection{Signal Model}
To reduce the hardware and energy cost, we consider the hybrid beamforming architecture for the XL-array BS~\cite{ahmed2018survey}. Let $\mathbf{F}_\mathrm{BB}\in\mathbb{C}^{N_{\mathrm{RF}}\times M}$ and $\mathbf{F}_\mathrm{RF}\in\mathbb{C}^{N\times N_{\mathrm{RF}}}$ denote the digital and analog beamformers, respectively. Then, the received signal at user $m$ is given by
$
y_{m}=\mathbf{h}_{m}^H\mathbf{F}_\mathrm{RF}\mathbf{F}_\mathrm{BB}\mathbf{x}+z_{m},
$ \cite{el2014spatially}
where $\mathbf{x}\in\mathbb{C}^{M\times1}$ denotes the transmitted data for $M$ users, and $z_{m}\sim\mathcal{CN}\left(0,\sigma_{m}^2\right)$ is the received additive white Gaussian noise (AWGN) with zero means and power $\sigma_{m}^{2}$.
To minimize the power consumption and without loss of generality, we assume that $N_\mathrm{RF}=M$.
\begin{figure}[!t]
\centering
	\includegraphics[width=0.85\linewidth]{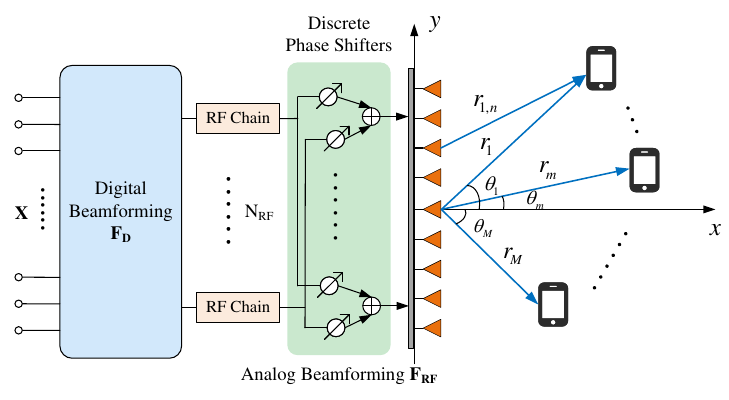}
	\centering
	\caption{An XL-array multi-user communication system under discrete PSs.}	
	\label{fig_1}
    \vspace{-12pt}
\end{figure}

Moreover, for ease of practical implementation, we consider \emph{discrete} PS for each antenna $n$, for which the phase of analog beamformer can only take a finite number of values equally sampled in $[0,2\pi)$. By denoting $B$ as the phase controlling bits, the set of available discrete phases at each antenna is obtained as $\mathcal{F}_{B}\triangleq\l\{\frac{(2{c}+1)\pi}{C_{B}}\mid c = 0,1,\cdots,C_{B}\!-\!1\r\}$, where $C_{B}=2^{B}$ denotes the number of quantization bins.\footnote{The obtained results can be further extended to the case with low-resolution digital-to-analog converters (DACs) in the conventional fully digital beamforming structure.}
Moreover, to study the effects of low-resolution PSs on the \emph{near-field beam-focusing} effect, we consider a low-complexity two-stage hybrid beamforming design~\cite{liu2023near}, where the analog beamforming is designed to maximize the received signal powers at individual users based on MRT by exploiting the near-field beam-focusing property. Then, digital beamforming is further devised to deal with residual IUI based on the effective channels accounting for analog beamforming, by using e.g., minimum mean square error (MMSE), zero-forcing (ZF) techniques. As such, we can obtain useful insights into the impacts of discrete PSs on the near-field beam pattern, which will be presented in the sequel.  

\section{Beam-Focusing Pattern Representation under Discrete Phase Shifters}

In this section, we propose an efficient FSE method to characterize the near-field beam pattern under the $B$-bit discrete PSs. Without loss of generality, we consider a typical user located at $(\theta_{\rm u}, r_{\rm u})$ and thus omit the user index in this section.
\subsection{Near-Field Beam Pattern under Phase Quantization}
Let ${\mathbf{f}_{\rm c}}(\theta_{\rm u},r_{\rm u})$ denote the MRT-based beamformer towards the location $(\theta_{\rm u},r_{\rm u})$ under the ideal assumption of \emph{continuous} phase shifts. As such, we have
\begin{align*}
{\mathbf{f}_{\rm c}}(\theta_{\rm u},r_{\rm u})&=\mathbf{b}(\theta_{\rm u},r_{\rm u})\nn\\
&=\frac{1}{\sqrt{N}}\left[e^{\jmath \varphi_{{\rm u},-\tilde{N}}},\cdots,e^{\jmath \varphi_{{\rm u},n}},\cdots,e^{\jmath \varphi_{{\rm u},\tilde{N}}}\right]^T,
\end{align*}
where $\varphi_{{\rm u},n} = \frac{2\pi}{\lambda}(nd\sin\theta_{\rm u}-\frac{n^2d^2\cos^2\theta_{\rm u}}{2r_{\rm u}})$. Let $U_{B}(\varphi)\in \mathcal{F}_{B}$ denote the phase-quantization function that maps a continuous phase to a discrete value according to a phase-quantization algorithm. 
Then, the MRT-based discrete beamformer under phase quantization is defined as
\begin{align*}
\mathbf{f}_{B}(\theta_{\rm u},r_{\rm u})\!\!=\!\!\frac{1}{\sqrt{N}}\!\!\left[e^{\jmath U_{B}(\varphi_{{\rm u},-\tilde{N}})},\!\cdots\!,e^{\jmath U_{B}(\varphi_{{\rm u},n})},\!\cdots\!,e^{\jmath U_{B}(\varphi_{{\rm u},\tilde{N}})}\right]^T\!\!.\!\!
\end{align*}
For ease of exposition, we define 
$
Q_{B}(\varphi)\triangleq e^{\jmath U_{B}(\varphi)},
$
where $Q_{B}(\cdot)$ is called the \emph{effective} quantization function and is simply referred to as the quantization function in the sequel by slight abuse of definition. Thus, we have
\begin{equation}
[\mathbf{f}_{B}(\theta_{\rm u},r_{\rm u})]_{n} = \frac{1}{\sqrt{N}}Q_{B}(\varphi_{{\rm u},n}),
\label{Eq:QB}
\end{equation}
whose near-field beam pattern is defined as below.

\begin{definition}[Beam pattern]\label{De:BPD}
{\rm
For the discrete analog beamformer $\mathbf{f}_{B}(\theta_{\rm u},r_{\rm u})$, its near-field beam pattern under $B$-bit PSs at an arbitrary
observation location $\{ r,\theta\}$ is defined as~\cite{zhou2024multi}
\begin{align}
F_{B}(\theta,r;\theta_{\rm u},r_{\rm u})&\triangleq\left|\mathbf{b}^H(\theta,r)\mathbf{f}_{B}(\theta_{\rm u},r_{\rm u}) \right|\nn\\
&=\frac1N\left|\sum_{n=-\tilde{N}}^{\tilde{N}}Q_{B}(\varphi_{{\rm u},n})e^{-\jmath \varphi_{ n}}\right|, \; \forall r, \theta.
\label{neq6}
\end{align}}
\end{definition}

Then, we provide the following definitions to characterize the beam pattern.~\cite{zhou2024sparse}. Note that different from continuous PSs, the near-field beam pattern under discrete PSs generally features both the main lobe and additional \emph{grating} lobes (which will be shown later). 
The definitions below are applicable to both the main and grating lobes.
\begin{definition}[Beam-width]
\rm The beam-width characterizes the angular width of the beam pattern in the \emph{distance-ring} (which is defined as $\!\!\{(r, \theta)|\frac{\cos^2\theta}{r}\!\!=\!\!k\frac{\cos^2\theta_{\rm u}}{r_{\rm u}}, k\!>\!0\}\!$ and shall be explained later in more details), where the beam power reduces to half of peak beam power (respectively denoted as $\theta_{\mathrm{right}}\!$ and $\theta_{\mathrm{left}}$). 
Mathematically, the beam-width is given by
\begin{equation}
\mathrm{BW}^{(B)}\!\triangleq\left|\sin\theta_{\mathrm{right}}-\sin\theta_{\mathrm{left}}\right|.
\end{equation}
\end{definition}

\begin{definition}[Beam-depth]
\rm At an observation angle $\theta$, the beam-depth characterizes the length of range interval $r \!\in\! [r_{\mathrm{large}}, r_{\mathrm{small}}]$, where the beam power reduces to half of peak beam power. Thus, the beam-depth is
\begin{equation}
\mathrm{BD}^{(B)}\triangleq\left|r_{\mathrm{large}}-r_{\mathrm{small}}\right|.
\end{equation}
\end{definition}

\begin{definition}[Beam-height]
\rm The beam-height characterizes the magnitude of the main lobe or grating lobe at an observation angle $\theta$, which is defined as 
\begin{equation}
\mathrm{BH}^{(B)}\triangleq\max_{r\in(Z_{\rm F}, Z_{\rm R})}\{F_{B}(\theta,r;\theta_{\rm u},r_{\rm u})\}.
\end{equation}
\end{definition}

Unlike the existing beam pattern analysis under continuous PSs \cite{cui2022channel}, the quantization function $Q_{B}(\cdot)$ in~\eqref{neq6} renders the beam pattern analysis under discrete PSs much more challenging, since there is no closed-form expression for $Q_{B}(\varphi_{{\rm u},n})$. 

\subsection{Beam Pattern Analysis Based on Fourier Series Expansion}
To address the above issue, we propose an efficient FSE method to characterize the near-field beam pattern for the discrete analog beamformer $\mathbf{f}(\theta_{\rm u},r_{\rm u})$.

Essentially, we exploit one key observation that $Q_{B}(\varphi)$ in~\eqref{Eq:QB} is a \emph{periodic} function of $\varphi$ with a period of $T=2\pi$. This motivates us to apply the FSE method to recast $Q_{B}(\varphi)$ in the following more tractable form
\begin{equation}
Q_{B}(\varphi)=\sum_{k=-\infty}^{\infty}a_k^{(B)}e^{\jmath k\frac{2\pi}{T}\varphi}=\sum_{k=-\infty}^{\infty}a_k^{(B)}e^{\jmath k\varphi},
\label{eq8}
\end{equation}
where the Fourier coefficient $a_k^{(B)}$ is given by
\begin{equation}
a_k^{(B)}\!\!=\!\dfrac{1}{T}\int_T \!\!Q_{B}(\varphi)e^{-\jmath k\varphi}d\varphi\!=\!\dfrac{1}{2\pi}\int_0^{2\pi}\!\!Q_{B}(\varphi)e^{-\jmath k\varphi}d\varphi.
\label{eq9}
\end{equation}
According to \eqref{eq8} and \textbf{Definition~\ref{De:BPD}}, we have the following result.

\begin{lemma}
\rm The near-field beam pattern of $\mathbf{f}(\theta_{\rm u},r_{\rm u})$ under $B$-bit PSs, given in \eqref{neq6}, can be equivalently expressed as
\begin{align}
F_{B}(\theta,r;\theta_{\rm u},r_{\rm u})\!&=\!\frac{1}{N}\!\left|\sum_{k=-\infty}^{\infty}\!\!a_{k}^{(B)}\sum_{n=-\tilde{N}}^{\tilde{N}}e^{\jmath (k\varphi_{{\rm u},n}-\varphi_{n})}\right|\nonumber\\ 
&=\!{\frac1N\!\left|\sum_{k=-\infty}^\infty\!\! a_k^{(B)}\!\!\sum_{n=-\tilde{N}}^{\tilde{N}}\!\!e^{\jmath \pi n\Delta_k+\jmath \frac{\pi\lambda}4n^2\Phi_k}\right|}\nonumber\\
&\triangleq\left|\sum\limits_{k=-\infty}^{\infty}f_k^{(B)}(\theta,r;\theta_{\rm u},r_{\rm u})\right|, \; \forall r, \theta,
\label{eq21}
\end{align}
where $\Delta_k\triangleq k\sin\theta_{\rm u}-\sin\theta$ is defined as the \emph{spatial angle
difference} and $\Phi_k\triangleq-k\frac{\cos^2\theta_{\rm u}}{r_{\rm u}}+\frac{\cos^2\theta}{r}$ is named as the \emph{ring
difference}~\cite{cui2022channel}. $f_k^{(B)}(\theta,r;\theta_{\rm u},r_{\rm u})$ in \eqref{eq21} is defined as
\begin{equation}
f_k^{(B)}(\theta,r;\theta_{\rm u},r_{\rm u})\triangleq\frac{1}{N}a_k^{(B)}\!\!\sum_{n=-\tilde{N}}^{\tilde{N}}\!\!e^{\jmath \pi n\Delta_k+\jmath \frac{\pi\lambda}4n^2\Phi_k}.
\label{neq14}
\end{equation}
\end{lemma}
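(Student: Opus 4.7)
The plan is to unfold the definition \eqref{neq6}, substitute the Fourier series \eqref{eq8} into each factor $Q_{B}(\varphi_{{\rm u},n})$, interchange the (finite) sum over $n$ with the (infinite) sum over $k$, and then use the Fresnel-regime expressions for $\varphi_{{\rm u},n}$ and $\varphi_{n}$ coming from \eqref{eq5} with $d=\lambda/2$ to rewrite the exponent in terms of the spatial-angle difference $\Delta_k$ and ring difference $\Phi_k$.

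First I would start from
\begin{equation*}
F_{B}(\theta,r;\theta_{\rm u},r_{\rm u})=\frac{1}{N}\left|\sum_{n=-\tilde{N}}^{\tilde{N}}Q_{B}(\varphi_{{\rm u},n})e^{-\jmath\varphi_{n}}\right|,
\end{equation*}
and plug in the FSE $Q_{B}(\varphi_{{\rm u},n})=\sum_{k=-\infty}^{\infty}a_{k}^{(B)}e^{\jmath k\varphi_{{\rm u},n}}$. Because the outer $n$-sum is finite, swapping the order of summation is immediate once pointwise convergence of the Fourier series at each of the finitely many sample points $\varphi_{{\rm u},n}$ is invoked; this is the one place I would pause and comment. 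The function $Q_{B}(\cdot)$ is piecewise constant with only $C_{B}=2^{B}$ jump discontinuities on $[0,2\pi)$, so it is of bounded variation and by Dirichlet's theorem its Fourier series converges to $Q_{B}(\varphi)$ at every point of continuity; at the isolated jumps one fixes a convention consistent with $U_{B}(\cdot)$, so the identity holds at every sample we ever evaluate.

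After interchange I obtain
\begin{equation*}
F_{B}=\frac{1}{N}\left|\sum_{k=-\infty}^{\infty}a_{k}^{(B)}\sum_{n=-\tilde{N}}^{\tilde{N}}e^{\jmath(k\varphi_{{\rm u},n}-\varphi_{n})}\right|,
\end{equation*}
which is the first line of \eqref{eq21}. Next I would substitute the Fresnel phases $\varphi_{{\rm u},n}=\frac{2\pi}{\lambda}(nd\sin\theta_{\rm u}-\frac{n^{2}d^{2}\cos^{2}\theta_{\rm u}}{2r_{\rm u}})$ and the analogous expression for $\varphi_{n}$, and set $d=\lambda/2$. A short computation then shows that the linear-in-$n$ part becomes $\pi n(k\sin\theta_{\rm u}-\sin\theta)=\pi n\Delta_{k}$ while the quadratic-in-$n$ part becomes $\frac{\pi\lambda}{4}n^{2}\bigl(-k\frac{\cos^{2}\theta_{\rm u}}{r_{\rm u}}+\frac{\cos^{2}\theta}{r}\bigr)=\frac{\pi\lambda}{4}n^{2}\Phi_{k}$, yielding the second line of \eqref{eq21}.

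Finally I would absorb the prefactor $\tfrac{1}{N}a_{k}^{(B)}$ and the inner $n$-sum into the quantity $f_{k}^{(B)}(\theta,r;\theta_{\rm u},r_{\rm u})$ defined in \eqref{neq14}, and note that the absolute value is taken on the single outer sum over $k$ (not termwise), which is allowed because $|\cdot|$ can be pulled outside of a factor of $1/N$ without change. This delivers the compact representation $F_{B}=\bigl|\sum_{k}f_{k}^{(B)}\bigr|$. The only genuinely nontrivial step is the sum interchange; everything else is algebraic substitution using \eqref{eq5} and \eqref{eq8}, so the hardest part will be stating the justification of interchanging sums cleanly while acknowledging the piecewise-constant nature of $Q_{B}$.
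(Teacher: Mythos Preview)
Your proposal is correct and follows exactly the approach the paper takes: the paper simply states that the lemma follows ``according to \eqref{eq8} and \textbf{Definition~\ref{De:BPD}}'' without any further proof, so your substitution of the Fourier series into \eqref{neq6}, interchange of the finite $n$-sum with the $k$-sum, and algebraic reduction of the exponent via \eqref{eq5} with $d=\lambda/2$ is precisely what is intended. Your added justification of the interchange via bounded variation and Dirichlet's theorem is more careful than the paper itself, which leaves that step implicit.
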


\begin{remark} 
[Physical meaning of $\Phi_k$]\rm 
When $\Phi_k=C$ (a constant), we can get a set of locations which collectively form a ring where the locations $(\theta,r)$ satisfy $-k\frac{\cos^2\theta_{\rm u}}{r_{\rm u}}+\frac{\cos^2\theta}{r}=C$. 
For example, for the beam pattern under continuous PSs shown in Fig. \ref{figs:beam pattern}(a), the yellow dashed lines indicate the rings representing different values of $\Phi_1$. In particular, when $\Phi_1=0$ (i.e., $\frac{\cos^2\theta}{r}=\frac{\cos^2\theta_{\rm u}}{r_{\rm u}}$), the user location $(\theta_{\rm u}, r_{\rm u})$ is on the ring.
\end{remark}

To simplify the analysis and obtain valuable insights, we consider the NN method in the sequel, wherein each phase is quantized to its nearest neighbor based on the closest Euclidean distance criterion~\cite{liang2014low}. 

\begin{figure*}[t]
    \centering
    \captionsetup[subfigure]{justification=centering}
    \begin{subfigure}[b]{0.32\textwidth}
        \centering
        \includegraphics[width=\textwidth]{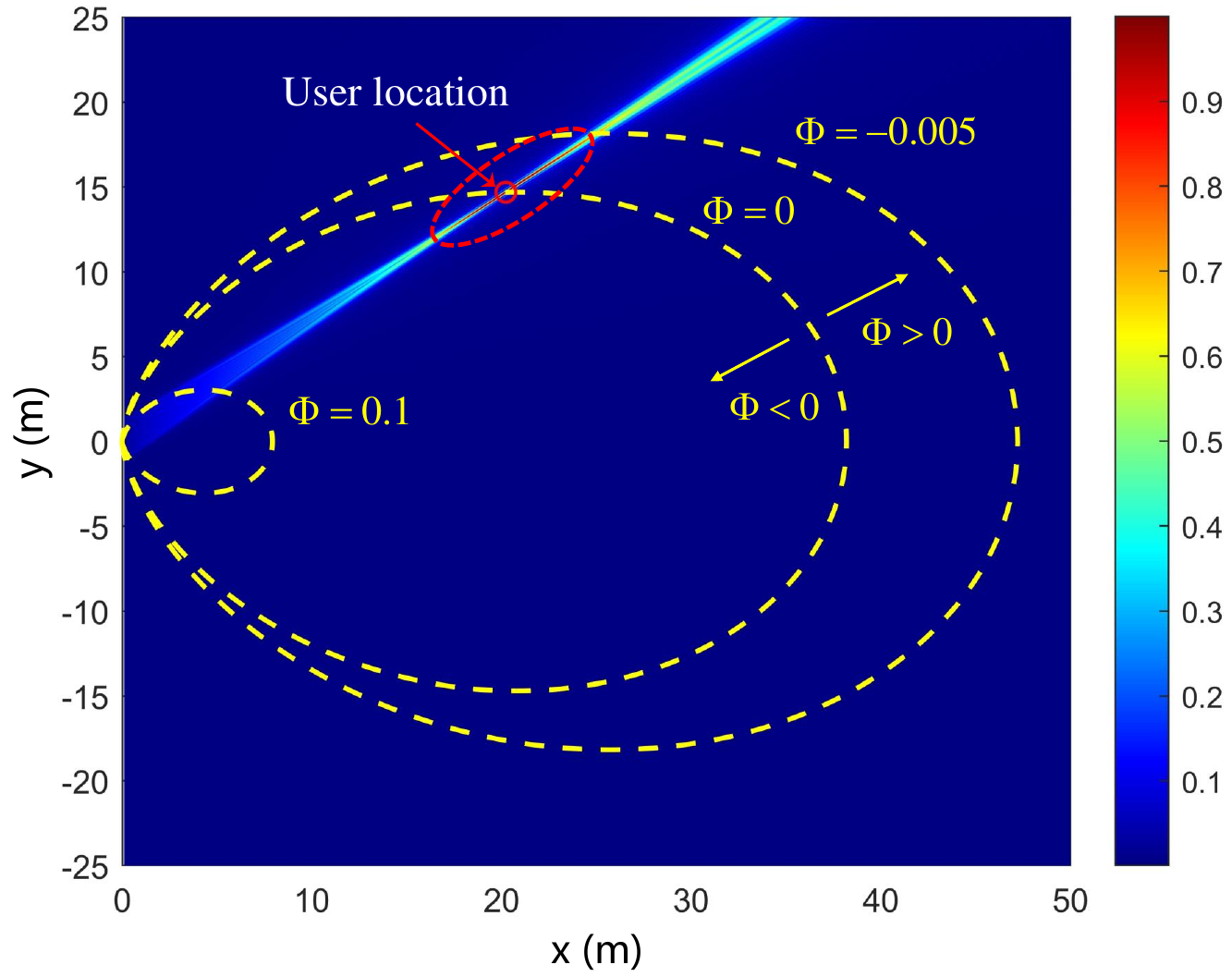} 
        \caption{Continuous PSs.}
    \label{fig.3.1}
    \end{subfigure}
    \hfill
    \begin{subfigure}[b]{0.32\textwidth}
        \centering
        \includegraphics[width=\textwidth]{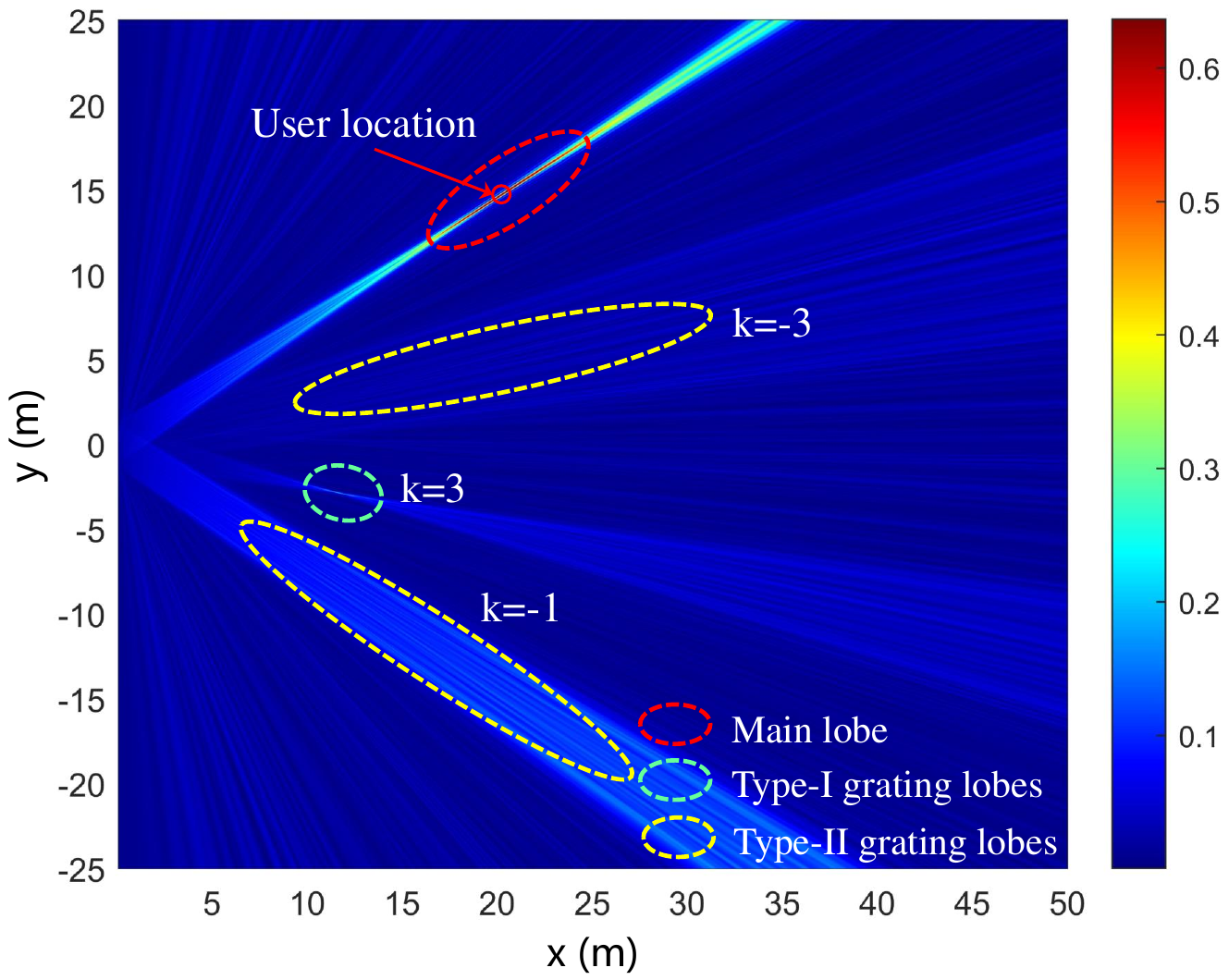} 
        \caption{1-bit PSs.}
        \label{fig.3}
    \end{subfigure}
    \hfill
    \begin{subfigure}[b]{0.32\textwidth}
        \centering
        \includegraphics[width=\textwidth]{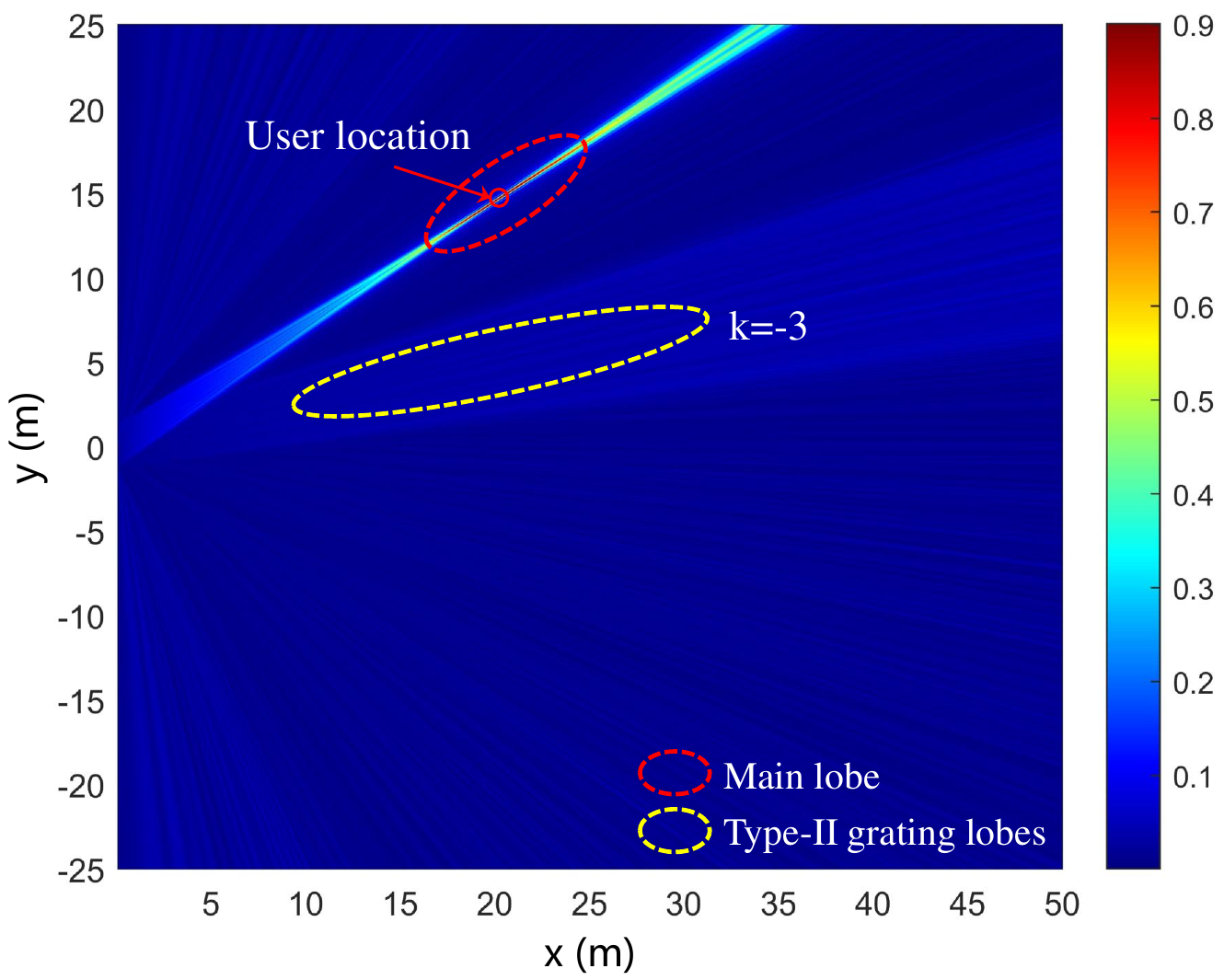}
        \caption{2-bit PSs.}
        \label{fig.6}
    \end{subfigure}
    \caption{The near-field beam pattern with $N=513$, $\theta_{\rm u}=\pi/5$, $r_{\rm u}=25$~m in different scenarios.}
    \label{figs:beam pattern}
    \vspace{-15pt}
\end{figure*}
\begin{lemma}[NN-based phase quantization]
\label{label:Quantization function}
\rm For a continuous phase $\varphi$, the discrete phase generated by the NN quantization method is $\hat{\varphi}=U_{B}(\varphi)=(2 \hat{c}+1)\pi/C_{B},$ where $\hat{c}$ is given by
\begin{equation}
\label{eq:Quantization function}
\hat{c} = \arg \underset{c \in \{0,1,\cdots,C_{B}\!-\!1\} }{\min} \left| \varphi_{\mathrm{mod}\;2\pi}-\frac{(2b+1)\pi }{C_{B}} \right|,
\end{equation}
with $\varphi_{{\rm mod}\;2\pi}=\mathrm{mod}(\varphi,2\pi)$ being the principal value of the continuous phase in the interval of $[0,2\pi)$.
\end{lemma}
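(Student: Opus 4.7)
The plan is to derive the formula from the geometric definition of nearest-neighbor quantization: the NN rule maps a continuous phase $\varphi$ to the discrete phase $\hat{\varphi}\in\mathcal{F}_{B}$ whose point $e^{\jmath\hat{\varphi}}$ on the unit circle is closest in Euclidean distance to $e^{\jmath\varphi}$, and the task is to reduce this to an argmin over the index set $\{0,1,\ldots,C_{B}-1\}$ in the stated form.

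First, I would convert the Euclidean distance on the unit circle to an angular one. Using the elementary identity $|e^{\jmath\alpha}-e^{\jmath\beta}|=2|\sin((\alpha-\beta)/2)|$ together with the fact that $|\sin(x/2)|$ is strictly increasing on $[0,\pi]$, the chordal minimization is equivalent to minimizing the geodesic distance $|\varphi-\psi|_{\bmod 2\pi}$ over $\psi\in\mathcal{F}_{B}$. Next I would invoke $2\pi$-periodicity of $e^{\jmath\varphi}$ to replace $\varphi$ by its principal value $\varphi_{\bmod 2\pi}\in[0,2\pi)$ without affecting the point being quantized, so that both the input and the whole candidate set $\mathcal{F}_{B}$ live in the same fundamental interval.

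Second, I would identify the argmin under the geodesic distance with the argmin of the ordinary absolute difference appearing in the lemma. Since the candidates $(2c+1)\pi/C_{B}$ form a uniform grid on $[0,2\pi)$ of step $2\pi/C_{B}$ and $\varphi_{\bmod 2\pi}$ is in the same interval, the unique candidate at geodesic distance $\le\pi/C_{B}$ also achieves the smallest ordinary absolute difference; any other candidate has both geodesic and ordinary distance strictly greater. A one-line check of the two boundary regions (principal value near $0$ and near $2\pi$) is what secures this equivalence. Finally, reading off $\hat{c}$ from this argmin and setting $\hat{\varphi}=(2\hat{c}+1)\pi/C_{B}$ yields the formula.

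The only delicate step is the last one: making sure that dropping the $\bmod\,2\pi$ inside the absolute value in \eqref{eq:Quantization function} does not cause wrap-around errors at the two endpoints of $[0,2\pi)$. I expect this to be the main obstacle only in the sense of being easy to state carelessly; it is resolved by noting that the "far" grid point under the plain absolute value is also far under the geodesic distance, so it cannot be the argmin under either convention.
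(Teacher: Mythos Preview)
The paper does not actually prove this lemma; it is stated without proof and functions as a formal restatement of the nearest-neighbor rule specialized to the grid $\mathcal{F}_{B}$, after which the paper immediately moves on to computing Fourier coefficients. Your proposal therefore supplies strictly more than the paper does: you derive the formula from the chordal distance $|e^{\jmath\varphi}-e^{\jmath\psi}|$, reduce it to the circular (geodesic) distance via $2|\sin((\varphi-\psi)/2)|$, and then argue that on the particular grid $\{(2c+1)\pi/C_{B}\}$ the geodesic argmin coincides with the plain absolute-value argmin once $\varphi$ is reduced modulo $2\pi$. That line of reasoning is correct.

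The one point you correctly identify as delicate---the wrap-around at the ends of $[0,2\pi)$---is indeed harmless, but the reason is slightly sharper than ``the far grid point is far under both conventions.'' Because the grid points sit at the midpoints $(2c+1)\pi/C_{B}$ of the bins $[2c\pi/C_{B},2(c+1)\pi/C_{B})$, every $\varphi_{\bmod 2\pi}\in[0,2\pi)$ lies within ordinary absolute distance at most $\pi/C_{B}$ of some grid point, and that same grid point is also a geodesic minimizer. The only discrepancy is at the bin boundaries $\varphi_{\bmod 2\pi}=2c\pi/C_{B}$ (in particular $\varphi_{\bmod 2\pi}=0$), where the geodesic distance yields a tie between two adjacent grid points while the ordinary absolute value in \eqref{eq:Quantization function} breaks the tie in favor of the smaller index; since any tie-breaking convention is admissible for an NN quantizer and these points form a null set, this does not affect the subsequent analysis.
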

\begin{remark}[Quantization methods]
\rm As shown in \cite{liang2014low}, the considered NN quantization method for analog beamforming can achieve superior rate performance when integrated with efficient digital beamforming methods (such as ZF), while it exhibits very low complexity in practice, hence striking flexible balance in performance-and-complexity trade-off. On the other hand, other more complicated phase quantization methods, such as the optimization techniques in~\cite{wang2018hybrid,sohrabi2016hybrid}, can be applied to further improve the rate performance at a higher computational complexity, for which the beam pattern analysis is more involved and thus left for our future work.
\end{remark}
Based on Lemma~\ref{label:Quantization function}, the Fourier coefficients of the NN method,  $\{a_k^{(B)}\}$, can be obtained in closed form as follows.
\begin{lemma}
\label{fourier coefficients}
\rm Given the $B$-bit PSs, the Fourier coefficients $\{a_k^{(B)}\}$ in~\eqref{eq9} under the NN quantization are given by
\begin{equation}
\begin{aligned}
\!\!a_k^{(B)}&\!\!=\!\!\left\{
  \begin{array}{ll}\!\!
    \frac{C_{B}}{k\pi}\sin(\frac{\pi}{C_{B}}), \!\! &\text{if } k\in \mathcal{T}, \\[12pt]\!\!
    0,  &\text{otherwise},
  \end{array}
\right.
\end{aligned}
\label{eq10}
\end{equation}
where $\mathcal{T}\triangleq \{t|t=1-pC_{B},\; p\in \mathbb{Z}\}$. 
\end{lemma}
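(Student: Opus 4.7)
The plan is to start from the explicit piecewise description of $Q_B(\varphi)$ implied by \textbf{Lemma~\ref{label:Quantization function}}, substitute it into the Fourier coefficient formula \eqref{eq9}, and reduce the result to an elementary geometric sum whose vanishing pattern pins down the index set $\mathcal{T}$. Specifically, the NN rule partitions the fundamental period $[0,2\pi)$ into the $C_B$ equal subintervals $I_c=[2c\pi/C_B,\,2(c+1)\pi/C_B)$ for $c=0,1,\dots,C_B-1$, on each of which $Q_B(\varphi)=e^{\jmath(2c+1)\pi/C_B}$ is constant. Thus I would write
\begin{equation*}
a_k^{(B)}=\frac{1}{2\pi}\sum_{c=0}^{C_B-1}e^{\jmath(2c+1)\pi/C_B}\int_{2c\pi/C_B}^{2(c+1)\pi/C_B}e^{-\jmath k\varphi}\,d\varphi,
\end{equation*}
which converts the problem into evaluating a finite sum of elementary integrals.

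For $k\neq 0$ the inner integral equals $(1-e^{-\jmath 2\pi k/C_B})e^{-\jmath 2\pi kc/C_B}/(\jmath k)$. After pulling the $c$-independent factors out, the remaining $c$-sum is a geometric series,
\begin{equation*}
\sum_{c=0}^{C_B-1}e^{\jmath 2\pi c(1-k)/C_B},
\end{equation*}
which equals $C_B$ exactly when $(1-k)/C_B\in\mathbb{Z}$, i.e.\ $k=1-pC_B$ for some $p\in\mathbb{Z}$, and vanishes otherwise. This already isolates the index set $\mathcal{T}$ claimed in the lemma and shows $a_k^{(B)}=0$ outside $\mathcal{T}$. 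I would handle $k=0$ separately: there the integral is just the average of $e^{\jmath(2c+1)\pi/C_B}$ over $c$, which is $C_B$-th roots of unity multiplied by a common phase, hence zero; and indeed $0\notin\mathcal{T}$.

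For the surviving indices $k=1-pC_B\in\mathcal{T}$, I would exploit the $2\pi$-periodicity of the complex exponential to simplify $e^{-\jmath 2\pi k/C_B}=e^{-\jmath 2\pi/C_B}$, so that
\begin{equation*}
1-e^{-\jmath 2\pi k/C_B}=e^{-\jmath\pi/C_B}\bigl(e^{\jmath\pi/C_B}-e^{-\jmath\pi/C_B}\bigr)=2\jmath\,e^{-\jmath\pi/C_B}\sin(\pi/C_B).
\end{equation*}
Substituting this, together with the factor $C_B$ from the geometric sum and the front factor $e^{\jmath\pi/C_B}$ coming from the $c=0$ contribution (all other phases telescope into $e^{\jmath\pi/C_B}$ once the sum collapses), gives $a_k^{(B)}=\tfrac{C_B}{\pi k}\sin(\pi/C_B)$, matching \eqref{eq10}.

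None of the steps is genuinely difficult; the only care is bookkeeping the phases so that the $e^{\jmath\pi/C_B}$ factors cancel cleanly and the $2\pi$-periodicity argument is invoked at the right moment. The main point that needs emphasizing is the geometric-sum dichotomy, since that is what discretizes the Fourier support to the arithmetic progression $\mathcal{T}=\{1-pC_B:p\in\mathbb{Z}\}$ and ultimately drives the grating-lobe structure used in the subsequent beam pattern analysis.
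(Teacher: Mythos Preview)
Your proposal is correct and follows essentially the same route as the paper's proof in Appendix~A: both split the Fourier integral along the $C_B$ equal subintervals on which $Q_B$ is constant, reduce the result to a geometric sum in $e^{\jmath 2\pi(1-k)/C_B}$ that vanishes unless $k\equiv 1 \pmod{C_B}$, and then simplify the surviving prefactor to $\frac{C_B}{k\pi}\sin(\pi/C_B)$, with $k=0$ handled separately as a sum of rotated $C_B$-th roots of unity. The only cosmetic differences are your ordering (treating $k\neq 0$ generically before specializing to $k\in\mathcal{T}$) and the use of $c$ versus the paper's $\ell$ as the subinterval index.
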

\begin{proof}
\rm Please refer to Appendix A.
\end{proof}

Three important properties of $\!a_k^{(B)}\!$ are summarized as~follows. First, only a small number of $\!a_k^{(B)}$ have~non-zero values when $k$ is in the sparse set $\mathcal{T}$. Second, the maximum value of $|a_k^{(B)}|$ is given by $\max_k |a_k^{(B)}|$ $=|a_1^{(B)}|$. Third, $|a_k^{(B)}|$ decreases rapidly with $k$ in an order of $\mathcal{O}(1/k)$. 
Thus, for ease of analysis and maintaining high accuracy, we only consider \emph{dominant} terms which satisfy e.g., $|a_k^{(B)}|\!>\!0.1|a_1^{(B)}|$. This is equivalent to $|k|\!<\!10$ according to \eqref{eq10}.
As such, the beam pattern function in~\eqref{eq21} can be approximated as 
\begin{equation}
\label{nneq17}
F_{B}(\theta,r;\theta_{\rm u},r_{\rm u})\approx\left|\sum\limits_{k \in \mathcal{T'}}f_k^{(B)}(\theta,r;\theta_{\rm u},r_{\rm u})\right|,
\end{equation}
where $\mathcal{T'}= \mathcal{T} \cap [-9,9]$.
In other words, there is only a small number of $f_k^{(B)}(\theta,r;\theta_{\rm u},r_{\rm u})$ that need to be considered in \eqref{nneq17}.
Moreover, $|\mathcal{T'}|$ decreases with PS resolution, since $C_{B}$ increases with $B$, hence resulting in fewer terms that need to be considered in \eqref{nneq17}.
In particular, when the continuous PSs are employed, \eqref{eq8} reduces to $Q_{\infty}(\varphi)=a_1^{(\infty)}e^{\jmath \varphi}=e^{\jmath \varphi}$, which only contains one term $f_1^{(\infty)}(\theta,r;\theta_{\rm u},r_{\rm u})$ in~\eqref{nneq17}.

\section{Beam Pattern Characterization under Discrete Phase Shifters: General Case}
\label{General Case}
In this section, we characterize the near-field beam pattern under the general $B$-bit discrete PSs. To this end, we first present a set of important properties for each term $|f_k^{(B)}(\theta,r;\theta_{\rm u},r_{\rm u})|$ and then analyze the beam properties of the main lobe and grating lobes. Finally, we extend the results and present the far-field beam pattern under $B$-bit PSs.
\subsection{Properties of Principle Components} \label{Sec:Pinciple}
Note that it is difficult to directly analyze the beam pattern function \eqref{nneq17}, since it is a summation of several terms $f_k^{(B)}(\theta,r;\theta_{\rm u},r_{\rm u})$. To tackle this issue, we first present important properties of $f_k^{(B)}(\theta,r;\theta_{\rm u},r_{\rm u})$ to obtain useful insights. 

First, note that $|f_k^{(B)}(\theta,r;\theta_{\rm u},r_{\rm u})|$ can be expressed as
\begin{equation}
|f_k^{(B)}(\theta,r;\theta_{\rm u},r_{\rm u})|= |a_k^{(B)}||H(\Delta_k, \Phi_k)|,
\end{equation}
where $|H(\Delta_k, \Phi_k)|\triangleq\left|\frac{1}{N}\!\!\sum_{n=-\tilde{N}}^{\tilde{N}}\!e^{\jmath \pi n\Delta_k+\jmath \frac{\pi\lambda}4n^2\Phi_k}\right|$.
The coefficient $|a_k^{(B)}|$ dictates the scaling factor of the amplitude $|f_k^{(B)}(\theta,r;\theta_{\rm u},r_{\rm u})|$, which is affected by the PS resolution $B$. 
On the other hand, $|H(\Delta_k, \Phi_k)|$ characterizes the \emph{spatial behavior} of $|f_k^{(B)}(\theta,r;\theta_{\rm u},r_{\rm u})|$ which depends on $\Delta_k$ and $\Phi_k$ only, yet \emph{independent} of the PS resolution $B$. 

As such, we consider the case of continuous PSs to study the properties of $|H(\Delta_k, \Phi_k)|$ for convenience. Specifically, the beam pattern function under continuous PSs is given by
\begin{equation}
F_{\infty}(\theta,r;\theta_{\rm u},r_{\rm u})=|f_1^{(\infty)}(\theta,r;\theta_{\rm u},r_{\rm u})|= |H(\Delta_1, \Phi_1)|,
\end{equation}
where $\Delta_1\!\triangleq\!\sin\theta_{\rm u}\!-\!\sin\theta\!\in\!(\sin\theta_{\rm u}\!-\!1,\sin\theta_{\rm u}\!+\!1)$ 
and $\Phi_1\!\triangleq\!-\frac{\cos^2\theta_{\rm u}}{r_{\rm u}}\!+\!\frac{\cos^2\theta}{r}$.
Moreover, $\!|H(\Delta, \Phi)|\!$ is a periodic function of $\!\Delta\!$ with a period of 2. 
By numerically shown its beam pattern in Fig. \ref{figs:beam pattern}(a) with the beam-steering towards the location $(\theta_{\rm u}=\pi/5$, $r_{\rm u}=25~{\rm m})$, we obtain the following important observations.
First, $|H(\Delta, \Phi)|$ reaches its maximum value of 1 when $\Delta=0$ (i.e., $\theta\!=\!\theta_{\rm u}$) and $\Phi=0$.
Second, $|H(\Delta, \Phi)|$ monotonically decreases with~$|\Phi|$.
Third, $|H(\Delta, \Phi)|$ has large values around the angle of $\theta\!=\!\theta_{\rm u}$ (which satisfies $\Delta\!=\!0$); while $|H(\Delta, \Phi)|\!\approx\!0$ at other angles.
Based on the above, we have the following result.
\begin{property}
\label{principle term}
\rm The function $f_k^{(B)}(\theta,r;\theta_{\rm u},r_{\rm u})$ in~\eqref{neq14} has the following properties.\\
\textbf{(Periodicity)}
    $|f_k^{(B)}(\theta,r;\theta_{\rm u},r_{\rm u})|$ is a periodic function of $\Delta_k$ with a period of 2.\\
$\!$\textbf{(Sparsity)} $|f_k^{(B)}(\theta,r;\theta_{\rm u},r_{\rm u})|$ has large values at the angle
    \begin{equation}
    \theta_k =\arcsin(\mathrm{mod}(k\sin\theta_{\rm u}+1,2)-1).
    \label{eq23}
    \end{equation}
    In other angles, $|f_k^{(B)}(\theta,r;\theta_{\rm u},r_{\rm u})|$ has negligible value, i.e. $|f_k^{(B)}(\theta,r;\theta_{\rm u},r_{\rm u})|\approx 0$.\\
\textbf{(Spatial behavior)} According to the value of $k$, $\!|f_k^{(B)}\!(\theta,r;\theta_{\rm u},r_{\rm u}\!)|$ has different spatial behaviors. 
    \begin{enumerate}[label=\arabic*), leftmargin=0.7cm]
         \item For $k\!>\!0$, $|f_k^{(B)}(\theta,r;\theta_{\rm u},r_{\rm u})|$ achieves its maximum value $|a_k^{(B)}|$ at the following location
        \begin{align}
        \theta_k &=\arcsin(\mathrm{mod}(k\sin\theta_{\rm u}+1,2)-1),  \label{Eq:MVAngle}\\
        r_k & = \frac{\cos^2\theta_k}{k\cos^2\theta_{\rm u}}r_{\rm u}.\label{Eq:MVRange}
        \end{align}
        \item For $k\leq0$, $|f_k^{(B)}(\theta,r;\theta_{\rm u},r_{\rm u})|$ does not reach its maximum at any specific location. Instead, it steers beam pattern around $\theta_k$ given by \eqref{eq23}, which is similar to beam-steering in the far-field scenario.
    \end{enumerate}
\end{property}
\begin{proof}
Please refer to Appendix B.
\end{proof}

However, given these properties, it is still challenging to characterize the properties of $F_{B}(\theta,r;\theta_{\rm u},r_{\rm u})$ in \eqref{nneq17}, since it is \emph{jointly} affected by all components $\{f_k^{(B)}(\theta,r;\theta_{\rm u},r_{\rm u})\}$. To address this issue, in the following, we characterize the beam pattern properties of individual component $|f_k^{(B)}(\theta,r;\theta_{\rm u},r_{\rm u})|$ separately. Note that this corresponds to the case where the beam patterns of $\{f_k^{(B)}(\theta,r;\theta_{\rm u},r_{\rm u})\}$ for different $k$ do not overlap significantly, while the case with overlapping beam patterns is more difficult to characterize and will be discussed by numerical results in Section \ref{overlap}.

In particular, we define the (dominant) main lobe and grating lobes for $F_{B}(\theta,r;\theta_{\rm u},r_{\rm u})$ as follows.
\begin{itemize}
    \item {\bf Main lobe:} As $|f_1^{(B)}(\theta,r;\theta_{\rm u},r_{\rm u})|$ achieves its maximum value at the user location $(\theta_{\rm u},r_{\rm u})$ (see \eqref{Eq:MVAngle} and \eqref{Eq:MVRange}), we call it the main lobe of the beam pattern function.
    \item {\bf Grating lobes:} For $k\!\in\!\mathcal{G} \!\triangleq\!\{k\!\in\!\mathcal{T'}|k\!\neq1\}$, $|f_k^{(B)}(\theta,r;\theta_{\rm u},r_{\rm u})|$ achieves relatively large value around~$\theta_k$ given in \eqref{eq23}, which are thus regarded as the functions of grating lobes. 
    Based on the value of $k$, grating lobes can be classified into two categories: 1) $k>0$, which is referred to as Type-I grating lobes, which attain maximum values at specific locations (i.e., $(\theta_k, r_k)$ from \eqref{Eq:MVAngle} and \eqref{Eq:MVRange}), thereby exhibiting beam-focusing property; 2)  $k\leq0$, which is referred to as Type-II grating lobes that do not have a focused location and approximately exhibit the beam-steering property.
\end{itemize}

The phenomena of the main lobe and grating lobes for the near-field beam pattern under discrete PSs are observed in the Fig.~\ref{figs:beam pattern}(b) and (c) of 1-bit and 2-bit case, respectively.
The specific characteristics of these lobes are presented as follows.

\subsection{Properties of Main/Grating Lobes}
In this subsection, we analyze the characteristics of the main lobe and grating lobes for $F_{B}(\theta,r;\theta_{\rm u},r_{\rm u})$.
\subsubsection{Main lobe}
$\!$For the main-lobe function 
$|f_1^{(\!B)}\!(\theta,r;\theta_{\rm u},r_{\rm u}\!)|$, its beam characteristics are obtained as follows.
\begin{proposition}
\label{Main lobe}
\rm Consider the near-field beam pattern  under the $B$-bit discrete PSs, i.e., $F_{B}(\theta,r;\theta_{\rm u},r_{\rm u})$. The beam-height, beam-width and beam-depth of its main lobe are as follows.
\begin{itemize}
    \item The beam-height of the main lobe is 
    \begin{equation}
    \!\!\!\!\!\!\!\mathrm{BH}_1^{(B)}\! =\!\left|a_1^{(B)}\right|\!=\!\left|\frac{C_{B}}{\pi}\!\sin\l(\frac\pi {C_{B}}\r)\right|\!=\!\left|\frac{2^{B}}\pi\!\sin\l(\frac\pi{2^{B}}\r)\right|.\!\!
    \label{neq23}
    \end{equation}
    \item The beam-width of the  main lobe is 
    $\mathrm{BW}_1^{(B)} = \frac{1.76}{N}.$
    \item The beam-depth of the  main lobe is 
    \begin{equation}
    \label{BD_main_lobe}\mathrm{BD}_1^{(B)}=\left\{\begin{array}{ll}
    \frac{2r_{\rm u}^2r_\mathrm{DF}}{r_\mathrm{DF}^2-r_{\rm u}^2}, & r_{\rm u}<r_{\mathrm{DF}}, \\
    \infty, & r_{\rm u} \geq r_{\mathrm{DF}},
    \end{array}\right.
    \end{equation}
    where $r_\mathrm{DF} = \frac{N^2\lambda \cos^2\theta_{\rm u}}{8\eta^2_{\mathrm{3dB}}}$ and $\eta_{\mathrm{3dB}}=1.31$.
\end{itemize}
\end{proposition}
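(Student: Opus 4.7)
The plan is to leverage the decomposition $|f_1^{(B)}(\theta, r; \theta_{\rm u}, r_{\rm u})| = |a_1^{(B)}|\,|H(\Delta_1, \Phi_1)|$ from Section~\ref{Sec:Pinciple}. The amplitude $|a_1^{(B)}|$ follows immediately from Lemma~\ref{fourier coefficients} (since $k=1$ lies in $\mathcal{T}$ via $p=0$), so the resolution dependence collapses into a single scalar, and the spatial factor $|H(\Delta_1, \Phi_1)|$ is resolution-independent and inherits the shape already established under continuous PSs. Thus the three items reduce to evaluating $|H|$ at the peak, in angle along a distance-ring, and in range along the user direction, and multiplying by $|a_1^{(B)}|$ where appropriate.

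The beam-height follows in one line. At $(\theta_{\rm u}, r_{\rm u})$, one has $\Delta_1 = 0$ and $\Phi_1 = 0$, so every summand in $H$ equals unity and $|H(0,0)| = 1$; hence $\mathrm{BH}_1^{(B)} = |a_1^{(B)}|$, and $C_B = 2^B$ gives the displayed expression. For the beam-width, the plan is to restrict attention to the distance-ring $\Phi_1 = 0$ (i.e.\ $\cos^2\theta/r = \cos^2\theta_{\rm u}/r_{\rm u}$), on which $|H|$ collapses to a Dirichlet-kernel function $|\sin(N\pi\Delta_1/2)/(N\sin(\pi\Delta_1/2))|$ depending only on $\Delta_1$. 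For large $N$ this behaves like $|\mathrm{sinc}(N\Delta_1/2)|$, whose half-power crossings occur at $|\Delta_1| \approx 0.88/N$; by symmetry the two 3-dB angular bounds are equidistant from $\Delta_1 = 0$, yielding $\mathrm{BW}_1^{(B)} \approx 1.76/N$.

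The beam-depth is the delicate step. Fixing $\theta = \theta_{\rm u}$ forces $\Delta_1 = 0$, so $|H(0, \Phi_1)| = |\frac{1}{N}\sum_{n=-\tilde{N}}^{\tilde{N}} e^{\jmath \pi\lambda n^2\Phi_1/4}|$. For large $N$ I would approximate the sum by $|\frac{1}{N}\int_{-N/2}^{N/2} e^{\jmath \pi\lambda x^2\Phi_1/4}\,dx|$, then use the Fresnel substitution so that the magnitude becomes a function of the single parameter $\beta = N^2\lambda\Phi_1/8$ expressible through Fresnel cosine/sine integrals $C(\cdot), S(\cdot)$. The half-power condition $|H|^2 = 1/2$ numerically yields $|\beta| = \eta_{\mathrm{3dB}}^2$ with $\eta_{\mathrm{3dB}} \approx 1.31$. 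Unfolding $\Phi_1 = \cos^2\theta_{\rm u}(1/r - 1/r_{\rm u})$ and introducing $r_{\rm DF}$ as defined, the two 3-dB range boundaries satisfy $1/r_{\rm small} = 1/r_{\rm u} + 1/r_{\rm DF}$ and $1/r_{\rm large} = 1/r_{\rm u} - 1/r_{\rm DF}$ (the latter being valid only when $r_{\rm u} < r_{\rm DF}$; otherwise $r_{\rm large} = \infty$ and $\mathrm{BD}_1^{(B)} = \infty$). Taking $r_{\rm large} - r_{\rm small}$ and putting over a common denominator reproduces \eqref{BD_main_lobe}.

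The main obstacle is the beam-depth step: justifying the sum-to-integral approximation uniformly over the 3-dB window of $\Phi_1$, and pinning down $\eta_{\mathrm{3dB}}$ through the Fresnel-integral magnitude, which cannot be inverted in closed form and must be determined numerically. Once these are in hand, the translation from a bound on $\Phi_1$ to bounds on $r$ is elementary algebra, and the piecewise structure emerges automatically from the positivity constraint $r_{\rm large} > 0$, which fails precisely when $r_{\rm u} \geq r_{\rm DF}$.
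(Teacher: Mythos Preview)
Your proposal is correct and follows essentially the same route as the paper's own proof: the beam-height comes directly from Lemma~\ref{fourier coefficients} with $k=1$, the beam-width is obtained by reducing $|H|$ on the distance-ring $\Phi_1=0$ to a Dirichlet/sinc kernel whose 3-dB crossing is at $\pm 0.44$ (Appendix~C), and the beam-depth uses the sum-to-integral Fresnel approximation of $|H(0,\Phi_1)|$ with the numerical threshold $\eta_{3\mathrm{dB}}=1.31$, then inverts the condition on $\Phi_1$ to obtain the two range boundaries (Appendix~D). The only cosmetic difference is your parametrization of the Fresnel variable (you write $\beta = N^2\lambda\Phi_1/8$ and set $|\beta|=\eta_{3\mathrm{dB}}^2$, whereas the paper takes the square root and sets $\beta\le\eta_{3\mathrm{dB}}$), but the resulting bounds on $1/r$ and the final formula are identical.
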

\begin{proof}
First, the beam-height of the main lobe is its maximum value, i.e., $\max_{\{\theta,r\}} |f_1^{(B)}(\theta,r;\theta_{\rm u},r_{\rm u})|=|a_1^{(B)}|$, which is obtained from~\eqref{eq10} with $k=1$. Then, for the proof of beam-width and beam-depth, please refer to Appendices C and D for the case of $k=1$, respectively.
\end{proof}

Note that, according to \eqref{BD_main_lobe}, when $r_{\rm u}<r_{\mathrm{DF}}$, the main lobe exhibits a beam-focusing property. As such, the region where $Z_{\rm F}\leq r_{\rm u}<r_{\mathrm{DF}}$ is referred to as the \emph{effective near-field region}.

\begin{figure}[!t]
\vspace{-10pt}
\centering
\includegraphics[width=0.75\columnwidth]{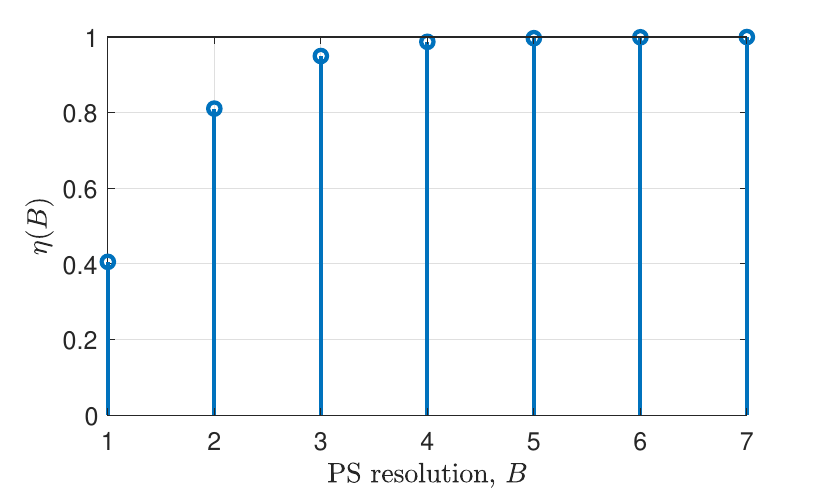}
\caption{The value of $\eta({B})$ versus the PS resolution $B$.}
\label{fig.7}
\vspace{-15pt}
\end{figure}
\begin{remark}[Impact of discrete PSs on main lobe] 
\rm \textbf{Proposition~\ref{Main lobe}} shows that both the beam-width and beam-depth of the main lobe under discrete PSs are \emph{not} affected by the discrete PS resolution, and hence are the \emph{same} as those under continuous PSs. This can be intuitively understood, since the beam pattern expressions of the main lobe (i.e., $|f_1^{(B)}(\theta,r;\theta_{\rm u},r_{\rm u})|=|a_1^{(B)}||H(\Delta_1, \Phi_1)|$) and that of continuous one (i.e., $F_{\infty}(\theta,r;\theta_{\rm u},r_{\rm u})=|H(\Delta_1, \Phi_1)|$) differ only in their coefficients $|a_1^{(B)}|$. Therefore, discrete PS resolution affects the beam-height of the main-lobe (see $|a_1^{(B)}|$ in \eqref{neq23}), while we have $|a_1^{(\infty)}|\!=\!1$ for continuous PSs. Let $\eta({B})$ denote the power ratio between the beam-heights under the discrete and continuous PSs, which is given by
\begin{equation}
\eta({B})=\frac{|a_1^{(B)}|^2}{|a_1^{(\infty)}|^2}
=\left(\frac{2^{B}}\pi\sin\left(\frac\pi{2^{B}}\right)\right)^2.
\end{equation} 
As shown in Fig.~\ref{fig.7}, the beamformer under 1-bit PSs suffers substantial power (or beam-height) loss as compared to the continuous case, while the loss reduces as the PS resolution increases and becomes negligible when $B\geq3$.
\end{remark}

\subsubsection{Grating lobes}
According to Section~\ref{Sec:Pinciple}, $\{|f_k^{(B)}(\theta,r;\theta_{\rm u},r_{\rm u})|, k \in \mathcal{G}\}$ represent grating lobes.
In the next, the beam properties of Type-I and Type-II grating lobes are characterized as follows.

\underline{\textbf{Type-I grating lobes:}} For $k\in \mathcal{G}_1\triangleq \{k\in \mathcal{G}| k>1\}$, we show below that Type-I grating lobes concentrate the beam power at specific locations.
\begin{proposition}
\label{Type-I grating lobes}
\rm Under the $B$-bit PSs, the beam-widths, beam-depths, and beam-heights of Type-I grating lobes, i.e. $|f_k^{(B)}(\theta,r;\theta_{\rm u},r_{\rm u})|$ with $k\in \mathcal{G}_1$, are given as follows.
\begin{itemize}
    \item The beam-heights of Type-I grating lobes are
    \begin{align}    \mathrm{BH}_k^{(B)}&= |a_k^{(B)}|=\left|\frac{C_{B}}{k\pi}\sin\left(\frac{\pi}{C_{B}}\right)\right| \\   &=\left|\frac{2^{B}\sin\left(\frac{\pi}{2^{B}}\right)}{(1-p2^{B})\pi}\right|, k\in \mathcal{G}_1, p\in \mathcal{P},
    \label{neq34}
    \end{align}
    where $\mathcal{P}\triangleq \{p|1-p2^{B}\in \mathcal{G}_1\}$.
    \item The beam-widths of Type-I grating lobes are
    \begin{equation}
    \mathrm{BW}_k^{(B)}=\frac{1.76}{N}, \; k\in \mathcal{G}_1.
    \end{equation}
    \item The beam-depths of Type-I grating lobes are
    \begin{equation}
    \!\!\mathrm{BD}_k^{(B)}\!\!=\!\!\left\{\begin{array}{ll}\!
    \frac{\cos^2\theta_k}{\cos^2\theta_{\rm u}}\frac{2r_{\rm u}^2r_\mathrm{DF}}{k^2r_\mathrm{DF}^2-r_{\rm u}^2}, &\!\! r_{\rm u}<kr_{\mathrm{DF}}, \\
    \infty, &\!\! r_{\rm u} \geq kr_{\mathrm{DF}},
    \end{array}\right. k\in \mathcal{G}_1.
    \end{equation}
    
\end{itemize}
\end{proposition}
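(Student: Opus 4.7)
The strategy is to exploit the factorization $|f_k^{(B)}(\theta,r;\theta_{\rm u},r_{\rm u})|=|a_k^{(B)}|\,|H(\Delta_k,\Phi_k)|$ established in Section~\ref{Sec:Pinciple}, so that the three beam quantities decouple into (i) a scalar Fourier coefficient that controls the amplitude and (ii) a two-variable kernel $|H(\Delta_k,\Phi_k)|$ whose profile is \emph{independent} of $B$ and thus can be analyzed once and reused. In particular, for every $k\in\mathcal{G}_1$, Property~\ref{principle term} guarantees that $|f_k^{(B)}|$ attains $|a_k^{(B)}|$ at the location $(\theta_k,r_k)$ where $\Delta_k=0$ and $\Phi_k=0$, so the local structure of the grating lobe near its peak mirrors that of the main lobe near $(\theta_{\rm u},r_{\rm u})$ up to the scalar $|a_k^{(B)}|$ and a change of coordinates.

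\textbf{Step 1 (beam-height).} I would substitute $k$ into Lemma~\ref{fourier coefficients}. Since $\mathcal{G}_1\subseteq\mathcal{T}$, any $k\in\mathcal{G}_1$ can be written as $k=1-p\,2^{B}$ for some $p\in\mathcal{P}$. Plugging this into \eqref{eq10} gives the closed form in \eqref{neq34}, and the fact that $|H(0,0)|=1$ shows $\mathrm{BH}_k^{(B)}=|a_k^{(B)}|$.

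\textbf{Step 2 (beam-width).} I would restrict the analysis to the distance-ring $\{(\theta,r):\Phi_k=0\}$ that passes through $(\theta_k,r_k)$. On this ring, $|H(\Delta_k,\Phi_k)|$ collapses to the Dirichlet-type kernel $\frac{1}{N}\bigl|\sum_{n=-\tilde N}^{\tilde N}e^{\jmath\pi n\Delta_k}\bigr|=\bigl|\tfrac{\sin(N\pi\Delta_k/2)}{N\sin(\pi\Delta_k/2)}\bigr|$, whose $3$~dB point is the well-known $|\Delta_k|\approx 0.88/N$. Writing $\Delta_k=k\sin\theta_{\rm u}-\sin\theta$ and using $d\Delta_k=-d(\sin\theta)$ shows that the half-power angles satisfy $|\sin\theta_{\mathrm{right}}-\sin\theta_{\mathrm{left}}|=1.76/N$, independent of $k$. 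This reuses the identical kernel argument already invoked for the main lobe in Proposition~\ref{Main lobe}.

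\textbf{Step 3 (beam-depth).} Fixing $\theta=\theta_k$ forces $\Delta_k=0$, and the kernel reduces to a Fresnel-type sum $\frac{1}{N}\bigl|\sum_{n=-\tilde N}^{\tilde N}e^{\jmath\pi\lambda n^{2}\Phi_k/4}\bigr|$, whose standard $3$~dB condition is $|\Phi_k|\le 8\eta_{\mathrm{3dB}}^{2}/(N^{2}\lambda)=\cos^{2}\theta_{\rm u}/r_{\mathrm{DF}}$. Substituting $\Phi_k=-k\cos^{2}\theta_{\rm u}/r_{\rm u}+\cos^{2}\theta_k/r$ turns this into the two-sided inequality
\begin{equation*}
\frac{\cos^{2}\theta_{\rm u}}{r_{\rm u}\,r_{\mathrm{DF}}}\bigl(k\,r_{\mathrm{DF}}-r_{\rm u}\bigr)\;\le\;\frac{\cos^{2}\theta_k}{r}\;\le\;\frac{\cos^{2}\theta_{\rm u}}{r_{\rm u}\,r_{\mathrm{DF}}}\bigl(k\,r_{\mathrm{DF}}+r_{\rm u}\bigr).
\end{equation*}
Inverting gives $r_{\mathrm{small}}$ always finite and $r_{\mathrm{large}}$ finite iff $r_{\rm u}<k\,r_{\mathrm{DF}}$; in the finite case, a direct subtraction and common-denominator simplification yields $\mathrm{BD}_k^{(B)}=\frac{2\cos^{2}\theta_k\,r_{\rm u}^{2}\,r_{\mathrm{DF}}}{\cos^{2}\theta_{\rm u}\,(k^{2}r_{\mathrm{DF}}^{2}-r_{\rm u}^{2})}$, and in the other case $r_{\mathrm{large}}=\infty$.

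\textbf{Expected obstacle.} Steps~1 and~2 are essentially bookkeeping on top of Lemma~\ref{fourier coefficients} and the main-lobe Dirichlet analysis. The only genuinely new calculation is Step~3: one must carefully extend the Fresnel-kernel half-power criterion, originally phrased for $k=1$, to general $k$ while keeping track of the sign of $k\,r_{\mathrm{DF}}-r_{\rm u}$ to identify the threshold above which $r_{\mathrm{large}}$ diverges. The algebraic simplification after inverting the two-sided inequality is the step most prone to sign errors and is the one I would verify most carefully.
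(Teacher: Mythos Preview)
Your proposal is correct and follows essentially the same route as the paper: the paper likewise invokes the factorization $|f_k^{(B)}|=|a_k^{(B)}|\,|H(\Delta_k,\Phi_k)|$, reads off the beam-height directly from Lemma~\ref{fourier coefficients}, reduces to the Dirichlet kernel on the ring $\Phi_k=0$ for the beam-width, and applies the Fresnel-integral $3$~dB criterion at $\theta=\theta_k$ to obtain the two-sided inequality in $1/r$ and hence the beam-depth with the $r_{\rm u}<k\,r_{\mathrm{DF}}$ threshold. Your anticipated obstacle (carrying the sign of $k\,r_{\mathrm{DF}}-r_{\rm u}$ through the inversion) is exactly where the paper's Appendix~D places its care as well.
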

\begin{proof}
\rm First, the beam-height of the Type-I grating lobe is its maximum value, i.e., $|a_k^{(B)}|$ for $k\in\mathcal{G}_1$, which can be obtained from~\eqref{eq10}. 
Then, for the proof of beam-width and beam-depth, please refer to Appendices C and D for the case of $k>1$, respectively.
\end{proof}

Note that for Type-I grating lobes, their beam-depths are affected by the focused location of the main lobe (i.e., $(\theta_{\rm u}, r_{\rm u})$) and the beam-heights are decided by the PS resolution, while their beam-widths are determined by the antenna number only.
Moreover, given $p$, the beam-height in \eqref{neq34} decreases with $B$, since its denominator increases exponentially with $B$ and the numerator increases slowly with $B$ as shown in Fig.~\ref{fig.7}.
\begin{corollary}[Asymptotic characteristic of main lobe and Type-I grating lobes]
\rm When the number of antennas $N$ becomes sufficiently large, the main lobe and Type-I grating lobes in the near-field focus their power exclusively on the point $(\theta_k, r_k)$, as given by \eqref{Eq:MVAngle} and \eqref{Eq:MVRange}, with no energy distributed in other regions, i.e.,
\begin{align}
\lim_{N\to\infty}|f_k^{(B)}(\theta,&r;\theta_{\rm u},r_{\rm u})|=0, \notag\\
&\text{ for } r\neq r_k \text{ or } \theta\neq\theta_k, k \in (\mathcal{G}_1 \cup \{1\}).
\label{corollary_1}
\end{align}
\end{corollary}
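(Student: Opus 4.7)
The plan is to reduce the claim to a pointwise estimate on the phase sum appearing in \eqref{neq14}. Writing
$$|f_k^{(B)}(\theta,r;\theta_{\rm u},r_{\rm u})| = \frac{|a_k^{(B)}|}{N}|S_N(\theta,r)|,\;\; S_N(\theta,r)\triangleq\sum_{n=-\tilde{N}}^{\tilde{N}}e^{\jmath\pi n\Delta_k+\jmath\frac{\pi\lambda}{4}n^2\Phi_k},$$
and noting that $|a_k^{(B)}|$ is a finite constant independent of $N$ for each $k\in\mathcal{G}_1\cup\{1\}$, the corollary is equivalent to $|S_N(\theta,r)|/N\to 0$ for every fixed $(\theta,r)\neq (\theta_k,r_k)$.

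First I would show that $(\theta_k,r_k)$ is the unique point in the physical range $\theta\in(-\pi/2,\pi/2)$ at which all summands coherently align. By \eqref{eq23} and \eqref{Eq:MVRange}, at this point $\Delta_k\in 2\mathbb{Z}$ and $\Phi_k=0$, so every summand equals $1$ and $|S_N|=N$. Conversely, since the principal value of $\arcsin$ in \eqref{eq23} pins down a unique angle, the simultaneous conditions $\Delta_k\in 2\mathbb{Z}$ and $\Phi_k=0$ select $(\theta,r)=(\theta_k,r_k)$ uniquely. Hence at any other $(\theta,r)$, at least one of $\Delta_k\notin 2\mathbb{Z}$ or $\Phi_k\neq 0$ must hold.

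Then I would split into two cases. In the on-ring case $\Phi_k=0$ (forcing $\Delta_k\notin 2\mathbb{Z}$), $S_N$ collapses to a geometric series whose ratio $e^{\jmath\pi\Delta_k}\neq 1$, so $|S_N|\leq 2/|1-e^{\jmath\pi\Delta_k}|$ is $N$-independent and $|f_k^{(B)}|=O(1/N)\to 0$. In the off-ring case $\Phi_k\neq 0$, I would complete the square to write
$$S_N = e^{-\jmath \pi\Delta_k^2/(\lambda\Phi_k)}\sum_{n=-\tilde{N}}^{\tilde{N}}e^{\jmath\frac{\pi\lambda\Phi_k}{4}(n+2\Delta_k/(\lambda\Phi_k))^2}$$
and approximate the chirp sum by the corresponding Fresnel integral, whose magnitude is uniformly bounded by $2\sqrt{2/(\lambda|\Phi_k|)}$. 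This yields $|S_N|=O(1/\sqrt{|\Phi_k|})$, hence $|f_k^{(B)}|=O\bigl(1/(N\sqrt{|\Phi_k|})\bigr)\to 0$.

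The main obstacle is justifying the Fresnel-integral approximation in the off-ring case, because $|g'(x)|=|\pi\Delta_k+\pi\lambda\Phi_k x/2|$ grows linearly across $[-\tilde{N},\tilde{N}]$, making the naive Euler--Maclaurin bound on the Riemann-sum error useless. I would control it by a van der Corput or Poisson-summation argument that exploits the bounded Fresnel support of the chirp's Fourier transform, isolating the $O(|\Phi_k|N)$ nonnegligible Fourier modes and summing their stationary-phase contributions. As an independent sanity check, Propositions~\ref{Main lobe} and \ref{Type-I grating lobes} show that the beam-width $1.76/N$ and the beam-depth (of order $1/r_{\mathrm{DF}}\sim 1/N^2$) both shrink to zero, so the 3-dB region collapses onto $(\theta_k,r_k)$, consistent with the asserted asymptotic concentration.
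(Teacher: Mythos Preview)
Your approach is essentially the same as the paper's: approximate the normalized chirp sum by its Fresnel-type integral and show the latter tends to zero away from $(\theta_k,r_k)$. The paper's own proof is a two-sentence sketch that cites \cite{wu2023multiple}, says to ``replace the summation with an integral'' and then ``show that the integral converges to 0, regardless of whether $\Phi_k=0$ or not''; you carry out precisely this program but with substantially more care---separating the $\Phi_k=0$ geometric-series case, completing the square in the $\Phi_k\neq 0$ case to reach the Fresnel form, and correctly flagging the sum-to-integral error (which the paper simply glosses over) as the one nontrivial technical point.
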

\begin{proof}
By using a method similar to that in~\cite{wu2023multiple}, we can obtain the desired results in \eqref{corollary_1}. First, we replace the summation in $|f_k^{(B)}(\theta,r;\theta_{\rm u},r_{\rm u})|$ with an integral. Then, we can show that when $r\neq r_k$ or $\theta\neq\theta_k$, the integral converges to 0, regardless of whether $\Phi_k=0$ or not, which completes the proof.
\end{proof}

\underline{\textbf{Type-II grating lobes:}} For $k\in \mathcal{G}_2\triangleq \{k\in \mathcal{G}| k\leq0\}$, we show below that Type-II grating lobes approximately steer the beam power
at certain angles, instead of exhibiting the beam-focusing behavior.
Note that for this case, there is no need to characterize the beam-heights, beam-widths and beam-depths, due to the disappearance of beam-focusing property.
As an alternative, we use surrogate beam-width defined below to characterize the angular width, as the peak beam power is difficult to obtain.
\begin{definition}[Surrogate beam-width]
\label{surrogate}
\rm This surrogate beam-width for the Type-II grating lobe at angle $\theta_k$ and distance $r_0$ is defined as the angular width at the ring ($\{(\theta,r)|\frac{\cos^2\theta}{r}=\frac{\cos^2\theta_k}{r_0}\}$), where the beam power reduces to half of the beam power at angle $\theta_k$~\cite{wu2024near}, which is respectively denoted as ${\Omega}_{\mathrm{left}}$ and ${\Omega}_{\mathrm{right}}$. As such, the surrogate beam-width is given by
\begin{equation}
\mathcal{A}^{(B)}_k(r_0)\triangleq\left|\sin\Omega_{\mathrm{right}}-\sin\Omega_{\mathrm{left}}\right|.
\end{equation}
\end{definition}
The surrogate beam-width defined above uses the power at the angle $\theta_k$ as the reference peak power, thus making the beam-width analysis for Type-II grating lobes more tractable. 
\begin{proposition}
\label{Type-II grating lobes}
\rm Under $B$-bit PSs, the properties of beam power and surrogate beam-width of Type-II grating lobe, i.e. $|f_k^{(B)}(\theta,r;\theta_{\rm u},r_{\rm u})|$ with $k\in \mathcal{G}_2$, are given as follows.
\begin{itemize}
    \item The beam power of Type-II grating lobe, i.e., $|f_k^{(B)}(\theta,r;\theta_{\rm u},r_{\rm u})|^2$, increases with $r$ at the angle $\theta_k$ given in \eqref{eq23}, while it decreases with $N$ and $B$. Moreover, given $r_{\rm u}, N$ and $B$, the beam power is upper-bounded as $|f_k^{(B)}(\theta_k,r;\theta_{\rm u},r_{\rm u})|^2<| a_k^{(B)}|^2 \left| \frac{C(\beta_{\infty})+\jmath S(\beta_{\infty})}{\beta_{\infty}} \right|^2$.
    \item When the user is located in the effective near-field region\footnote{When user range is beyond the effective near-field region, the surrogate beam-width is much narrower, and thus the closed-form expression in \eqref{surrogate beam-width} may not be accurate. However, it can still be obtained through numerical methods.} (i.e. $Z_{\rm F}\leq r_{\rm u}<r_{\mathrm{DF}}$), the surrogate beam-width of Type-II grating lobe at distance $r_0$ is given by
    \begin{equation}
    \!\mathcal{A}^{(B)}_k(r_0)\!=\!Nd|\Phi_{k,0}|+(1-\sqrt{3})\sqrt{d|\Phi_{k,0}|}, \; k\in \mathcal{G}_2,
    \label{surrogate beam-width}
    \end{equation}
    where $\Phi_{k,0}\!=\!-k\frac{\cos^2\theta_{\rm u}}{r_{\rm u}}+\frac{\cos^2\theta_k}{r_0}$. Moreover, the surrogate beam-width decreases with $r_0$ and increases with $N$.
\end{itemize}
\end{proposition}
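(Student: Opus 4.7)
The plan is to convert the finite sum inside $|H(\Delta_k,\Phi_k)|$ to a continuous integral and reduce it to the standard Fresnel integrals $C(\cdot)$ and $S(\cdot)$, so that both parts of the proposition follow from tractable closed-form expressions. Starting from $|f_k^{(B)}(\theta,r;\theta_{\rm u},r_{\rm u})|=|a_k^{(B)}|\,|H(\Delta_k,\Phi_k)|$, I would first approximate
\[
\frac{1}{N}\sum_{n=-\tilde N}^{\tilde N} e^{\jmath \pi n \Delta_k + \jmath \frac{\pi\lambda}{4} n^2 \Phi_k}
\;\approx\; \frac{1}{N}\int_{-\tilde N}^{\tilde N} e^{\jmath \pi x \Delta_k + \jmath \frac{\pi\lambda}{4} x^2 \Phi_k}\,dx,
\]
which is accurate for the large $N$ of interest, then complete the square as $\frac{\pi\lambda\Phi_k}{4}(x-n_0)^2 - \frac{\pi\Delta_k^2}{\lambda\Phi_k}$ with stationary point $n_0\triangleq -2\Delta_k/(\lambda\Phi_k)$, and rescale via $u=\sqrt{\lambda|\Phi_k|/2}\,(x-n_0)$ to express the integral as a difference of Fresnel integrals with shifted limits $u_{\pm}=\sqrt{\lambda|\Phi_k|/2}(\pm\tilde N-n_0)$. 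The PS-resolution dependence is then carried entirely by $|a_k^{(B)}|$ from \textbf{Lemma 3}.

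For the first bullet I specialize to $\theta=\theta_k$, so $\Delta_k=0$ and $n_0=0$; the limits become symmetric about the origin and one obtains
\[
|H(0,\Phi_k)|\;\approx\;\left|\frac{C(\beta)+\jmath S(\beta)}{\beta}\right|,\quad \beta\triangleq\tilde N\sqrt{\lambda|\Phi_k|/2}.
\]
For $k\le 0$, $\Phi_k = |k|\cos^2\theta_{\rm u}/r_{\rm u}+\cos^2\theta_k/r$ is strictly positive and strictly decreases in $r$ toward $|k|\cos^2\theta_{\rm u}/r_{\rm u}$, giving $\beta\to\beta_{\infty}$. Combined with the (standard) monotone decay of $|C(x)+\jmath S(x)|/x$ on $(0,\infty)$, this yields the increase of beam power in $r$ and the claimed $\beta_{\infty}$-upper bound. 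The decrease with $N$ follows from $\beta\propto\tilde N$, and the decrease with $B$ follows from the explicit form $|a_k^{(B)}|=(C_B/|k|\pi)|\sin(\pi/C_B)|$ evaluated at the relevant $k\in\mathcal{G}_2$.

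For the second bullet I restrict to the ring $\cos^2\theta/r=\cos^2\theta_k/r_0$, on which $\Phi_k$ is frozen at $\Phi_{k,0}$ and the beam power becomes a one-dimensional function of $\Delta_k$. Its peak is at $\Delta_k=0$ (where $n_0=0$, so the Fresnel limits are symmetric), and the power drops off as $|n_0|$ grows and the aperture window slides off the stationary point. The leading linear term $Nd|\Phi_{k,0}|$ in the beam-width formula comes from the boundary condition $|n_0|=\tilde N$ (i.e., the stationary point just reaching the aperture edge), using $d=\lambda/2$ and $N\approx 2\tilde N$. The sub-leading correction $(1-\sqrt 3)\sqrt{d|\Phi_{k,0}|}$ arises from a second-order expansion of $|[C(u_+)-C(u_-)]+\jmath[S(u_+)-S(u_-)]|^2$ around that boundary configuration, setting it equal to half of the peak value $|2C(\beta_0)+2\jmath S(\beta_0)|^2$ with $\beta_0=\tilde N\sqrt{\lambda|\Phi_{k,0}|/2}$, and solving for $\Delta_k$. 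The monotonicities (decreasing in $r_0$ because $|\Phi_{k,0}|$ is, increasing in $N$ through the first term) are then immediate from the closed form.

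The main obstacle will be extracting the exact constant $1-\sqrt 3$ in the sub-leading correction. Once $n_0$ approaches $\pm\tilde N$, both Fresnel limits $u_\pm$ depend sensitively on $\Delta_k$, and one needs to carry the Taylor expansion of the squared Fresnel-difference magnitude (using the identities $C'(x)=\cos(\pi x^2/2)$ and $S'(x)=\sin(\pi x^2/2)$) to second order, and then solve the resulting half-power equation carefully enough to isolate this particular constant. By contrast, the qualitative statements (the three monotonicities in $r$, $N$, $B$, the $\beta_{\infty}$ upper bound, and the monotonicities of the surrogate beam-width in $r_0$ and $N$) fall out essentially for free once the Fresnel-integral representation is in hand.
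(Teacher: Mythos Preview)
Your proposal is essentially the paper's approach: Fresnel-integral approximation of $H(\Delta_k,\Phi_k)$, monotonicity of $\beta\mapsto|C(\beta)+\jmath S(\beta)|/\beta$ for the first bullet, and the ``stationary point at the aperture edge'' picture (equivalently $\beta_1^{(k)}\approx\beta_2^{(k)}$ in the paper's $\beta_1=\Delta_k/\sqrt{d|\Phi_{k,0}|}$, $\beta_2=(N/2)\sqrt{d|\Phi_{k,0}|}$ parametrization) for the second.

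The one place where your plan and the paper's execution diverge is exactly the part you flagged as the obstacle. You propose a second-order Taylor expansion of the squared Fresnel-difference using $C'(x)=\cos(\pi x^2/2)$ and $S'(x)=\sin(\pi x^2/2)$. That works on the small-argument side (it is how you get $C(\delta)\approx\delta$, $S(\delta)\approx 0$ for $\delta\triangleq\beta_1-\beta_2$), but on the large-argument side ($\beta_1+\beta_2$ and $\beta_2$) the derivatives are purely oscillatory and a Taylor step buys you nothing clean. The paper instead uses the large-$\beta$ asymptotics $C(\beta)\approx\tfrac12+\tfrac{1}{\pi\beta}\sin(\tfrac{\pi}{2}\beta^2)$ and $S(\beta)\approx\tfrac12-\tfrac{1}{\pi\beta}\cos(\tfrac{\pi}{2}\beta^2)$, then \emph{drops the oscillatory pieces} so that $\widehat C\approx\tfrac12-\delta$, $\widehat S\approx\tfrac12$, and $4[C^2(\beta_2)+S^2(\beta_2)]\approx 2+\tfrac{4}{\pi^2\beta_2^2}$. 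Setting the ratio to $\tfrac12$ gives the quadratic $(\tfrac12-\delta)^2=\tfrac34+\tfrac{2}{\pi^2\beta_2^2}$, whence $\delta\approx\tfrac{1-\sqrt{3}}{2}$ and the stated surrogate beam-width. So the missing ingredient in your outline is the combination ``small-$\beta$ linearization on one limit, large-$\beta$ asymptotic to the constant $\tfrac12$ on the other limits, and discard the residual oscillations''; with that, the $1-\sqrt 3$ falls out of a one-line quadratic rather than a delicate second-order expansion.
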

\begin{proof}
\rm Please refer to Appendix E.
\end{proof}
\begin{corollary}[Asymptotic characteristic of Type-II grating lobes]
\vspace{-4pt}
\rm When the number of antennas $N$ becomes sufficiently large, the power of Type-II grating lobe in the near-field approaches zero, i.e.,
\begin{align}
\lim_{N\to\infty}|f_k^{(B)}(\theta,r;\theta_{\rm u},r_{\rm u})|=0,\; \forall r, \theta, k\in\mathcal{G}_2. 
\end{align}
\end{corollary}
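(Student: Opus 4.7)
The plan is to adapt the integral-approximation argument that underlies the Type-I corollary, exploiting the crucial structural fact that for $k\in\mathcal{G}_2$ (i.e., $k\le 0$) the ring-difference $\Phi_k = -k\cos^2\theta_{\rm u}/r_{\rm u} + \cos^2\theta/r$ is \emph{strictly positive} and bounded away from zero everywhere in the near-field region, since $-k\ge 0$ and $\cos^2\theta/r>0$ for any admissible $(\theta,r)$. This is the key distinction from Type-I lobes, where $\Phi_k$ can vanish along the focusing ring $r=r_k$ and create a Dirichlet-type peak of height $O(1)$; for Type-II lobes no such peak exists and a single Fresnel-based bound suffices uniformly in $(\theta,r)$.

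First I would recast $f_k^{(B)}(\theta,r;\theta_{\rm u},r_{\rm u}) = (a_k^{(B)}/N)\sum_{n=-\tilde{N}}^{\tilde{N}} e^{\jmath\pi n\Delta_k + \jmath(\pi\lambda/4)n^2\Phi_k}$ and approximate the sum by the continuous integral $\int_{-N/2}^{N/2} e^{\jmath\pi t\Delta_k + \jmath(\pi\lambda/4)t^2\Phi_k}\,dt$, exactly as in the proof of the Type-I corollary. Completing the square in the exponent and applying the substitution $v = \sqrt{\lambda\Phi_k/2}\,t + \Delta_k\sqrt{2/(\lambda\Phi_k)}$ (which is well-defined precisely because $\Phi_k>0$) converts the integral into a Fresnel form $\sqrt{2/(\lambda\Phi_k)}\,e^{-\jmath\pi\Delta_k^2/(\lambda\Phi_k)}\bigl[C(v)+\jmath S(v)\bigr]_{v_1}^{v_2}$, where the endpoints $v_1,v_2$ depend on $N,\theta,r$ but the bracketed difference of Fresnel functions is bounded by a fixed global constant $M$, since $|C|$ and $|S|$ are bounded on $\mathbb{R}$.

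Collecting the pieces yields the key estimate $|f_k^{(B)}(\theta,r;\theta_{\rm u},r_{\rm u})|\le (|a_k^{(B)}|/N)\sqrt{2/(\lambda\Phi_k)}\,M$, i.e., an $O(1/N)$ bound whose hidden constant depends on $(k,B,\theta_{\rm u},r_{\rm u},\theta,r)$ but not on $N$; letting $N\to\infty$ gives the claimed zero limit. The main obstacle I anticipate is rigorously controlling the error of the sum-to-integral passage in the presence of the chirped quadratic phase, since the phase derivative grows linearly in $n$ and a naive trapezoidal estimate is not directly useful. I would handle this either via a van der Corput-type lemma for quadratic oscillatory sums, or by decomposing the summation range into blocks on which the phase is essentially linear and bounding each block by a geometric-series argument; in either case the residual error remains $O(1)$ and is absorbed by the $1/N$ prefactor, completing the proof.
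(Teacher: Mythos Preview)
Your proposal is correct and follows essentially the same approach as the paper: replace the sum by an integral and exploit the Fresnel-function representation to obtain an $O(1/N)$ bound, with the key observation that $\Phi_k>0$ is bounded away from zero for $k\in\mathcal{G}_2$ so no focusing peak can occur. The paper's own proof is simply ``similar to Corollary~1 and thus omitted,'' so your completion-of-square detail and your discussion of the sum-to-integral error are a legitimate fleshing-out of the same argument rather than a different route.
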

\begin{proof}
The proof is similar to Corollary 1 and thus is omitted.
\end{proof}
\begin{remark}[Effect of PS resolution]
\vspace{-10pt}
\rm Comparing \textbf{\textbf{Propositions}} 2 and 3, we can conclude that the PS resolution (i.e., $B$) affects the beam power of both the main and grating lobes, while it has no influence on the beam-width and beam-depth. In general, the higher the PS resolution, the lower the power loss in the main lobe due to discrete PSs. In addition, the beam powers of grating lobes are suppressed more effectively when the PS resolution increases.
\end{remark}
\vspace{-10pt}
\subsection{Far-field Beam Pattern under Discrete PSs}
\label{far-field}
In this subsection, we extend the above beam pattern analysis method to the far-field case to obtain useful insights. Note that when $r_{\rm u}$ and $r$
are sufficiently large, the near-field channel model degenerates into the far-field counterpart. 
Specifically, the far-field MRT-based discrete beamformer under $B$-bit phase quantization is given by
$
[\mathbf{f}_{\mathrm{far},B}(\theta_{\rm u})]_{n} = \frac{1}{\sqrt{N}}Q_{B}(n\pi\sin\theta_{\rm u}).
$
Then, the main/grating lobe function can be obtained as
\begin{align}
\!\!|f_{\mathrm{far},k}^{(B)}(\theta;\theta_{\rm u})|&\!=\!\left|\frac{1}{N}a_k^{(B)}\!\!\sum_{n=-\tilde{N}}^{\tilde{N}}\!\!e^{\jmath \pi n\Delta_k}\right| \nonumber\\ 
&\!=\!\left|a_{k}^{(B)}\Xi_{N}(\pi(k\sin\theta_{\rm u}-\sin\theta))\right|\!,\;k \in \mathcal{T'}.
\label{far field beam pattern}
\end{align}
As such, the beam pattern properties for the far-field case under the $B$-bit PSs can be obtained by using the similar approach to the near-field case, which are presented below.
\begin{proposition}
\label{far-field main lobe}
\rm The main lobe of the far-field discrete beamformer $\mathbf{f}_{\mathrm{far},B}(\theta_{\rm u})$ steers the beam power along the user angle $\theta_{\rm u}$. Moreover, its beam-height is $\mathrm{BH}_{\mathrm{far},1}^{(B)} = |a_1^{(B)}|$ and its beam-width is $\mathrm{BW}_{\mathrm{far},1}^{(B)} = \frac{1.76}{N}$. 
\end{proposition}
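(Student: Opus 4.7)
The plan is to specialize the far-field beam-pattern formula \eqref{far field beam pattern} to $k=1$, which gives $|f_{\mathrm{far},1}^{(B)}(\theta;\theta_{\rm u})|=|a_1^{(B)}|\,|\Xi_{N}(\pi(\sin\theta_{\rm u}-\sin\theta))|$, where $\Xi_{N}(x)=\frac{1}{N}\sum_{n=-\tilde{N}}^{\tilde{N}}e^{\jmath nx}=\frac{\sin(Nx/2)}{N\sin(x/2)}$ is the standard Dirichlet-type kernel. This already factors the beam pattern into a $B$-dependent amplitude $|a_1^{(B)}|$ (known in closed form from \textbf{Lemma}~\ref{fourier coefficients}) and a $B$-independent spatial shape, which is the structural reason both the steering direction and the beam-width will turn out to be independent of the PS resolution.

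First I would establish the steering direction and the beam-height together. Since $|\Xi_{N}(x)|\leq 1$ with equality at $x=0$, the peak of the main lobe occurs when $\pi(\sin\theta_{\rm u}-\sin\theta)=0$. Restricting to $\theta,\theta_{\rm u}\in(-\pi/2,\pi/2)$ forces $\theta=\theta_{\rm u}$, so the main lobe is steered along the user direction, and its peak value is $|a_1^{(B)}|\cdot \Xi_{N}(0)=|a_1^{(B)}|$, which is precisely $\mathrm{BH}_{\mathrm{far},1}^{(B)}$.

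Next, for the beam-width I would apply the half-power criterion $|\Xi_{N}(\pi\Delta_1)|^{2}=1/2$ with $\Delta_1=\sin\theta_{\rm u}-\sin\theta$, observing that the constant factor $|a_1^{(B)}|^{2}$ cancels from both sides and thereby drops out of the computation entirely. Within the main-lobe region, the standard large-$N$ approximation $\Xi_{N}(x)\approx \mathrm{sinc}(Nx/(2\pi))$ reduces the $3$-dB condition to $|\mathrm{sinc}(N\Delta_1/2)|^{2}=1/2$, whose well-known solutions are $N\Delta_1/2\approx\pm 0.443$, i.e.\ $\Delta_1\approx\pm 0.88/N$. Therefore $\mathrm{BW}_{\mathrm{far},1}^{(B)}=|\sin\theta_{\mathrm{right}}-\sin\theta_{\mathrm{left}}|=2\cdot 0.88/N=1.76/N$, matching the near-field main-lobe width in \textbf{Proposition}~\ref{Main lobe} and confirming independence from $B$.

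The main obstacle is the small-argument/large-$N$ approximation of $\Xi_{N}$ needed to pin down the explicit constant $1.76$; however this is exactly the step already carried out in Appendix~C for the near-field main lobe, because the far-field pattern coincides with the near-field pattern in the degenerate case $\Phi_1=0$ (the quadratic term in the exponent of $H(\Delta_1,\Phi_1)$ vanishes). That existing derivation can therefore be reused verbatim, so no genuinely new analytical work is required beyond the straightforward specialization above.
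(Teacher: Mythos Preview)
Your proposal is correct and follows essentially the same approach as the paper: specialize \eqref{far field beam pattern} to $k=1$, read off the peak location and beam-height from the Dirichlet kernel, and invoke the same sinc approximation used in Appendix~C (the near-field beam-width proof at $\Phi_k=0$) to obtain $1.76/N$. Your additional remark that the far-field pattern is precisely the near-field pattern with $\Phi_1=0$ makes the reuse of Appendix~C explicit, which the paper leaves implicit.
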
 
\begin{proof}
The main lobe beam function in the far-field is given by $|f_{\mathrm{far},1}^{(B)}(\theta;\theta_{\rm u})|=\left|a_{1}^{(B)}\Xi_{N}(\pi(\sin\theta_{\rm u}-\sin\theta))\right|$.
Then, the beam-height of the main lobe is its maximum value, i.e., $\max_\theta|f_{\mathrm{far},1}^{(B)}(\theta,\theta_{\rm u})|\!=\!|a_1^{(B)}|$. Moreover, we can use the same method as in the proof of beam-width in \textbf{Proposition 1} to obtain the result $\mathrm{BW}_{\mathrm{far},1}^{(B)} = \frac{1.76}{N}$.
\end{proof}
\begin{proposition}
\label{far-field grating lobe}
\rm The grating lobes of the far-field discrete beamformer $\mathbf{f}_{\mathrm{far},B}(\theta_{\rm u})$ exhibit the beam-steering property and occur at the angles 
$
\theta_k =\arcsin(\mathrm{mod}(k\sin\theta_{\rm u}+1,2)-1),\; k\in \mathcal{G}.
$
Their beam-heights are given by $\mathrm{BH}_{\mathrm{far},k}^{(B)} =|a_k^{(B)}|$ and their beam-widths are $\mathrm{BW}_{\mathrm{far},k}^{(B)} = \frac{1.76}{N},\; \forall k\in \mathcal{G}$.
\end{proposition}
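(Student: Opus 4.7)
The plan is to mirror the structure of the proof of \textbf{Proposition~\ref{far-field main lobe}} (main-lobe case), but carried out for each grating-lobe index $k\in\mathcal{G}$, starting from the closed-form factorization already available in \eqref{far field beam pattern}:
\[
|f_{\mathrm{far},k}^{(B)}(\theta;\theta_{\rm u})|=|a_k^{(B)}|\,\bigl|\Xi_{N}(\pi(k\sin\theta_{\rm u}-\sin\theta))\bigr|.
\]
This factorization cleanly separates the PS-resolution-dependent amplitude $|a_k^{(B)}|$, already computed in closed form via \textbf{Lemma~\ref{fourier coefficients}}, from the purely spatial Dirichlet-type kernel $|\Xi_{N}(\cdot)|$ which is \emph{independent} of $B$.

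First, I would locate the peak angles. Since $|\Xi_{N}(x)|$ is $2\pi$-periodic and reaches its unique maximum value $1$ on a period at $x=0$, the peak condition for the $k$-th grating lobe is $\pi(k\sin\theta_{\rm u}-\sin\theta)\in 2\pi\mathbb{Z}$, i.e., $\sin\theta = k\sin\theta_{\rm u}-2p$ for some $p\in\mathbb{Z}$. Enforcing the physical constraint $\sin\theta\in[-1,1]$ selects the unique $p$ that wraps $k\sin\theta_{\rm u}$ back into $[-1,1]$, which, by routine modular arithmetic, gives $\sin\theta=\mathrm{mod}(k\sin\theta_{\rm u}+1,2)-1$ and therefore $\theta_k=\arcsin(\mathrm{mod}(k\sin\theta_{\rm u}+1,2)-1)$. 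This recovers exactly the angle specified in \textbf{Property~\ref{principle term}}, reused here in the far-field limit. At this angle the kernel attains $|\Xi_N(0)|=1$, so the beam-height is $\mathrm{BH}_{\mathrm{far},k}^{(B)}=|a_k^{(B)}|$ as claimed.

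For the beam-width, I would observe that the only difference between $|f_{\mathrm{far},k}^{(B)}|$ near $\theta_k$ and $|f_{\mathrm{far},1}^{(B)}|$ near $\theta_{\rm u}$ is a constant amplitude scaling and a recentering of the Dirichlet kernel; the local shape of $|\Xi_{N}(\pi(k\sin\theta_{\rm u}-\sin\theta))|$ around its peak is identical to that of $|\Xi_N(\pi(\sin\theta_{\rm u}-\sin\theta))|$ around $\theta_{\rm u}$. Hence the $3$-dB solution $|\Xi_N(\pi\Delta_k)|^2=1/2$ yields exactly the same half-power spread in $\sin\theta$ as for the main lobe treated in \textbf{Proposition~\ref{far-field main lobe}}, giving $\mathrm{BW}_{\mathrm{far},k}^{(B)}=1.76/N$ for every $k\in\mathcal{G}$. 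Finally, the beam-steering property is immediate from the fact that $|f_{\mathrm{far},k}^{(B)}(\theta;\theta_{\rm u})|$ depends only on $\theta$ (no range dependence is left in the far-field), so the $k$-th lobe energy fans out along the direction $\theta_k$ rather than concentrating at any finite range, and thus no beam-depth quantity is needed.

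The only delicate step is the peak-angle reduction: one must verify that exactly one integer $p$ places $k\sin\theta_{\rm u}-2p$ in $[-1,1]$ (with the boundary case $k\sin\theta_{\rm u}\in\mathbb{Z}$ handled by the convention used inside $\mathrm{mod}(\cdot,2)$), and that this $p$ is precisely the one producing the formula $\mathrm{mod}(k\sin\theta_{\rm u}+1,2)-1$. Everything else reduces to invoking Lemma~\ref{fourier coefficients} for the coefficient and reusing the Dirichlet-kernel $3$-dB width already derived for the main lobe.
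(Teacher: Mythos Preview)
Your proposal is correct and follows essentially the same route as the paper: start from the factorization \eqref{far field beam pattern}, read off the beam-height $|a_k^{(B)}|$ from the Dirichlet kernel's peak, locate $\theta_k$ via the periodicity/modular reduction (the paper defers this to Property~\ref{principle term}/Appendix~B, you redo it explicitly), and obtain the $1.76/N$ beam-width by reusing the Dirichlet-sinc $3$-dB computation already done for an earlier proposition. The only cosmetic difference is that the paper points back to the near-field Type-I beam-width derivation (Proposition~\ref{Type-I grating lobes}, Appendix~C) rather than to Proposition~\ref{far-field main lobe}, but the underlying kernel calculation is identical.
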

\begin{proof}
The grating lobe functions in the far-field are given by $|f_{\mathrm{far},k}^{(B)}(\theta;\theta_{\rm u})|=\left|a_{k}^{(B)}\Xi_{N}(\pi(\sin\theta_{\rm u}-\sin\theta))\right|$, $\forall k\in\mathcal{G}$.
They can achieve their maximum
values $\{|a_k^{(B)}|\}$ at angles $\theta_k =\arcsin(\mathrm{mod}(k\sin\theta_{\rm u}+1,2)-1),\; \forall k\in \mathcal{G}$. As such, these grating lobes steer the beam power at these angles $\{\theta_k\}$ and beam-heights $|a_k^{(B)}|$ are obtained.
Moreover, we can use the same method as in the proof of beam-width in \textbf{Proposition 2} to obtain similar results of $\mathrm{BW}_{\mathrm{far},k}^{(B)} = \frac{1.76}{N}$, $\forall k\in \mathcal{G}$.
\end{proof}
    

Note that, different from the near-field scenario, there is only one type of grating lobe, and all grating lobes exhibit the beam-steering property in the far-field.
\begin{remark}[Far-field versus near-field]
\rm From the beam pattern perspective, employing discrete PSs in the near-field demonstrates superior performance over the far-field scenario, 
due to two main factors.
\begin{itemize}
    \item In the far-field, the main lobe and Type-I grating lobes exhibit beam-steering rather than the beam-focusing observed in the near-field, which significantly increases IUI at the same angle.
    \item In the far-field, the grating lobes associated with $k<0$ exhibit higher beam power than Type-II grating lobes in the near-field, hence potentially resulting in stronger IUI.
\end{itemize}
\end{remark}
\begin{figure}[!t]
  \centering
    \includegraphics[width=0.7\columnwidth]{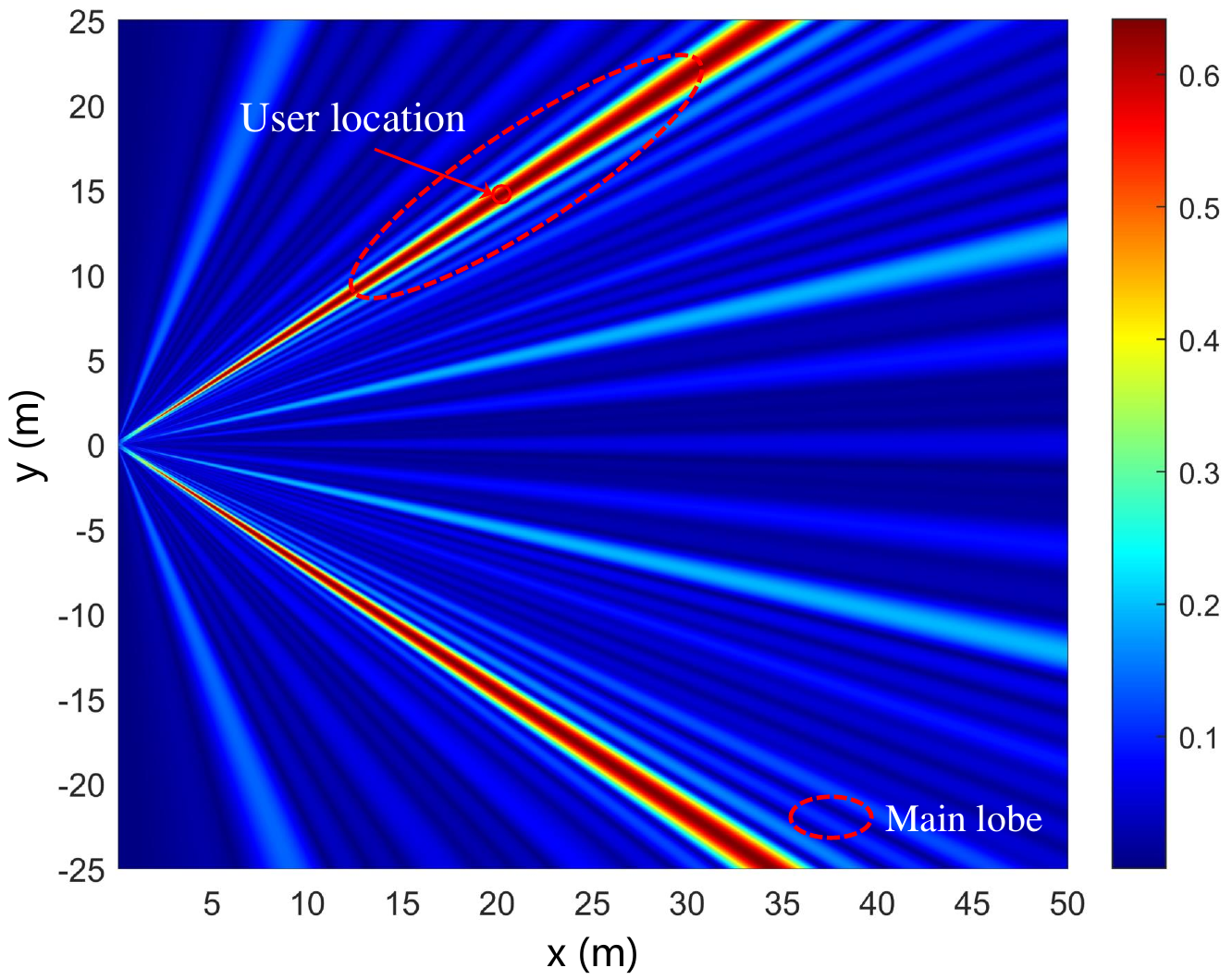}
    \caption{The far-field beam pattern with $N=65$, $\theta_{\rm u}=\pi/5$, $r_{\rm u}=25$ m under 1-bit PSs.}
    \label{far-field beam pattern}
    \vspace{-15pt}
\end{figure}
\section{Beam Pattern Characterization under Discrete Phase Shifters: Special Cases and Discussions}
In this section, we first present useful insights for the special cases of 1-bit and 2-bit PSs. Then, we provide physical interpretation for the far- and near-field beam patterns under discrete PSs from an \emph{array-of-subarray} perspective.
Finally, we analyze the impact of discrete PSs on multi-user communication performance.

\subsection{Beam Pattern under 1-bit and 2-bit PSs}
In Fig.~\ref{figs:beam pattern}(b), we present the near-field beam pattern for 1-bit PSs under the setup of $\{N\!=\!513, \;\!\theta_{\rm u}\!=\!\pi/5,\;\! r_{\rm u}\!=\!25\;\!\mathrm{m}\}$.
Several interesting observations are made as follows. 
First, for the 1-bit PSs, there exist multiple grating lobes which have  different beam powers and beam shape. 
Specifically, Type-I grating lobes (e.g., $k\!=\!3$) exhibit the beam-focusing property and their beam-heights are $|a_{k}^{(1)}|$, while Type-II grating lobes (e.g., $k\!=\!-1$ and $k\!=\!-3$) do not have the beam-focusing property.
In particular, the grating lobe with $k = -1$ has the highest power $|a_{-1}^{(1)}| = |a_1^{(1)}|$, which occurs at $\theta = -\theta_{\rm u}$ (see~\eqref{Eq:MVAngle}). 
This is referred to as the \emph{primary grating lobe} for the 1-bit case.
Moreover, the main lobe is centered around the user location, exhibiting a beam-focusing property with a beam-height of $|a_1^{(1)}|\approx\frac{2}{\pi}$.

Next, we further show the near-field beam pattern under 2-bit PSs in Fig.~\ref{figs:beam pattern}(c). It is observed that the grating lobes are much weaker than the main lobe, as $|a_{1}^{(2)}|$ is much higher than the others with $k\neq1$.
Notably, its grating lobes do not have strong power as the primary grating lobe in the 1-bit case.
Moreover, the beam-height of the main lobe for the 2-bit PS case is much larger than that of the 1-bit case.

In addition, for the near- and far-field beam pattern comparison, we present the far-field beam pattern under 1-bit PSs in Fig.~\ref{far-field beam pattern} given $\theta_{\rm u}\!=\!\pi/5$, $r_{\rm u}\!=\!25$ m and $N\!=65$.
One can observe that for the far-field case, all the main lobe and grating lobes exhibit the beam-steering property, sharing the same beam-heights of $|a_{k}^{(1)}|$.
Moreover, the grating lobes with $k<0$ under the far-field case exhibit higher beam powers than Type-II grating lobes in the near-field.
\subsection{Overlap among Dominant Lobes}
\label{overlap}
Next, we discuss the impact of beam overlap among dominant lobes on the near-field beam pattern.
When the angles of these lobes overlap significantly, the beam characteristics are the superposition of individual beam patterns. 
However, it is generally challenging to obtain a closed-form expression for the composite beam pattern (if tractable).

\begin{figure}[!t]
    \centering
    \includegraphics[width=0.7\columnwidth]{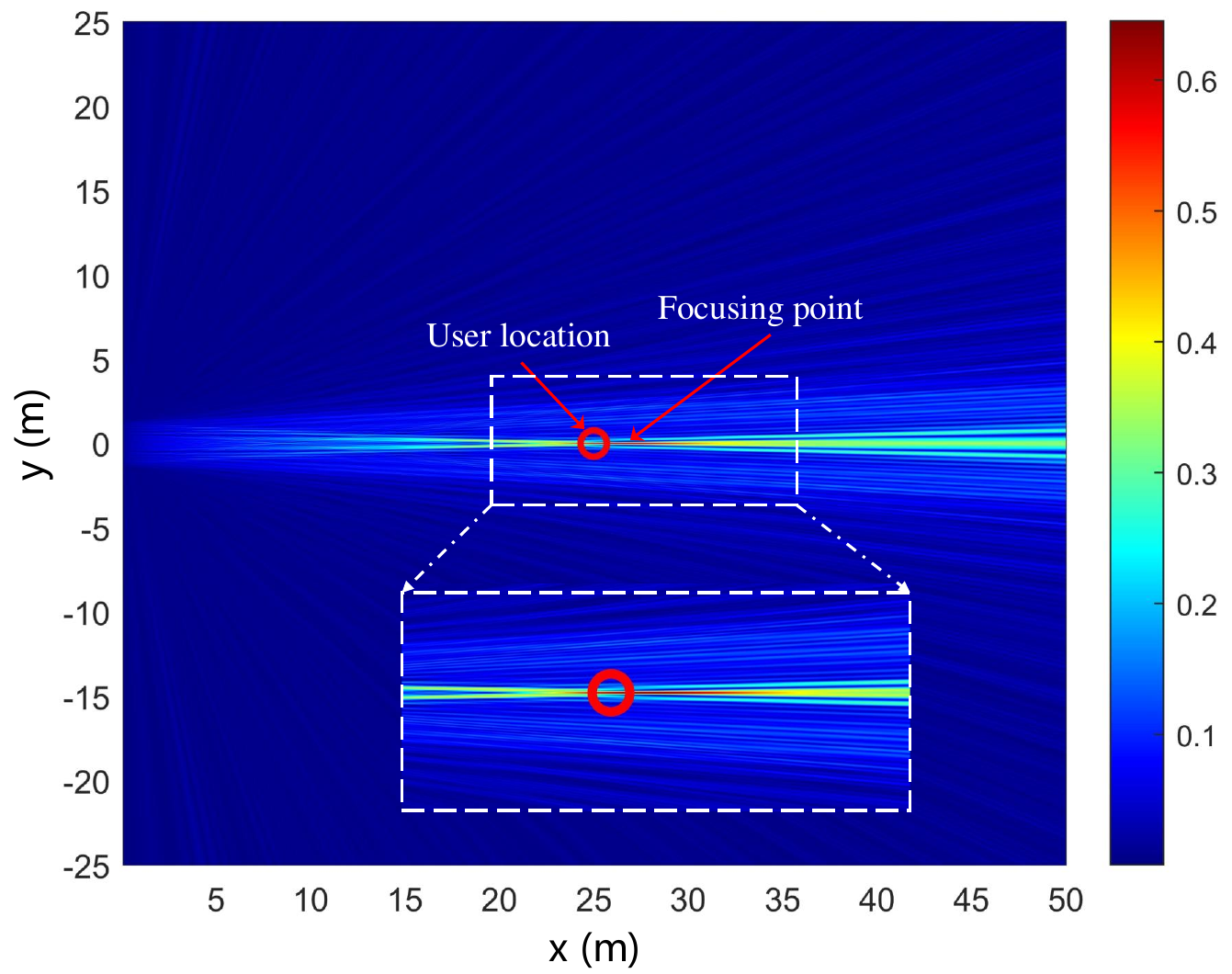} 
    \caption{The overlap case in the near-field with $N=513$, $\theta_{\rm u}=0$, $r_{\rm u}=25$ m under 1-bit PSs.}
    \label{overlapping}
    \vspace{-15pt}
\end{figure}
To shed useful insights, we present below several interesting observations for the corresponding beam pattern. 
In Fig.~\ref{overlapping}, we show the beam pattern when $\theta_{\rm u} = 0$, and $B=1$, for which we have $\theta_k = 0$ for $k \in \mathcal{G}$. This corresponds to the case where all grating lobes occur at the user angle, resulting in the most severe beam overlap.
It is observed that the focused beam location shifts away from the user location (i.e., from $r=25$ m to $r=27$ m). 
However, such beam-focusing offset generally has mile impact on the received beam power even in the most severe case of overlap.
To intuitively explain this phenomenon, we consider two strongest lobes, i.e., the main lobe ($k=1$) and the primary grating lobe ($k=-1$). 
According to \textbf{Proposition 3}, the power of the primary grating lobe increases with $r$. Moreover, the main lobe achieves its maximum beam power at the user location. 
Consequently, after their superposition, the focused beam appears at a range slightly greater than the user range.

The above beam overlap may lead to reduced received power at the user location and increased interference at other locations. 
To mitigate these effects, one can increase the number of antennas or enlarge the PS resolution to enhance the power of main lobe while suppressing that of grating lobes. 

\vspace{-5pt}
\subsection{Understanding the Beam Pattern: An Array-of-subarrays Perspective} 
\label{subarrays}
In this subsection, we provide physical interpretation to intuitively understand the beam patterns in both the far-field and near-field.
In particular, we show that under phase quantization, the entire array can be partitioned into multiple subarrays, for which all antennas in each subarray share the same quantized phase shift. 
Such a virtual array-of-subarrays architecture will be shown to have similar beam pattern with \emph{sparse arrays}. 
This thus leads to the occurrence of grating lobes in both the far-field and near-field cases, while their beam patterns are affected by the corresponding array-of-subarrays structures.

First, we consider the far-field case, for which, based on MRT, the ideal phase at the $n$-th antenna should be set as $\varphi_{{\rm u},n} \!=\!\pi n\sin\theta_{\rm u}$. 
According to the NN phase quantization, when the phase shifts of antennas satisfy $\frac{2\ell\pi}{C_{B}} \leq \varphi_{{\rm u},n} < \frac{2(\ell+1)\pi}{C_{B}},\!\;\!\ell \in \mathbb{Z}$, they are quantized into the same value.
Thus, the indices of antennas with identical quantized phases are given by $\frac{2\ell}{C_{B}\sin\theta_{\rm u}}\!\leq \!n \!< \!\frac{2(\ell+1)}{C_{B}\sin\theta_{\rm u}}, \!\;\!\ell \in \mathbb{Z}$. 
The number of antennas in each subarray can be approximated as $V_{\rm far}=\lceil\frac{2}{C_{B}\sin\theta_{\rm u}}\rceil$, which is almost the same for different subarrays.
As such, the entire antenna array has similar beam pattern with the one of multiple uniform subarrays, as illustrated in Fig. \ref{nfig.3}(a).
Specifically, the beam pattern function of the entire array can be expressed as the Kronecker product of the beam pattern of an LSA (with a sparsity of $S=\frac{2}{C_{B}\sin\theta_{\rm u}}$, formed by the centers of the subarrays) and that of a ULA with identical phase. 
Consequently, the locations of grating lobes in the far-field are primarily attributed to the equivalent LSA, with their amplitudes determined by the number of antennas in each equivalent subarray.

\begin{figure}[!t]
\centering
\includegraphics[width=0.7\columnwidth]{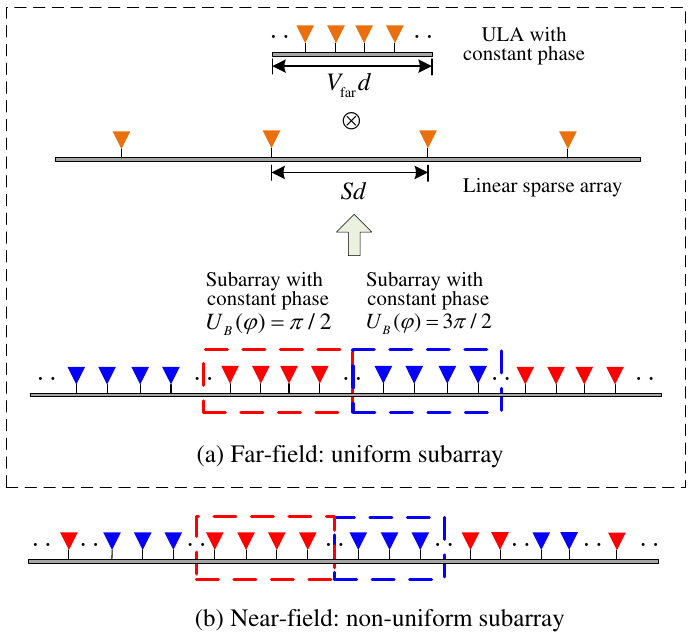}
\caption{The virtual array-of-subarrays architecture under discrete PSs for the far-field and near-field scenarios.} 
\label{nfig.3}
\vspace{-17pt}
\end{figure}
Second, we consider the near-field case. Based on MRT, the ideal phase shift at the $n$-th antenna is non-linear with $n$, i.e., $\varphi_{{\rm u},n} = \frac{2\pi}{\lambda}(nd\sin\theta_{\rm u}-\frac{n^2d^2\cos^2\theta_{\rm u}}{2r_{\rm u}})$.
Similar to the array partitioning method as in the far-field case, it can be shown that the entire array can be divided into multiple subarrays. In particular, the number of antennas in the $\ell$-th subarray can be obtained as
\begin{equation*}
V_{{\rm near}, \ell}(\theta_u, r_u)\!\! =\!\! \bigg\lceil\!\frac{4}{C_{B}\!\!\left(\sqrt{X_\ell(\theta_{\rm u}, r_{\rm u})} \!+ \!\!\sqrt{X_{\ell+1}(\theta_{\rm u}, r_{\rm u})}\right)}\!\bigg\rceil, \ell \in \mathbb{Z},
\end{equation*}
where $X_\ell(\theta_{\rm u}, r_{\rm u})=\sin^2\theta_{\rm u}-\frac{2\lambda\cos^2\theta_{\rm u}}{r_{\rm u}C_B}\ell$.
$V_{{\rm near},\ell}$ monotonically decreases with $r_u$ and equals to $V_{\rm far}$ when $r_{\rm u}\to \infty$.
As such, for the near-field case, the entire array can be equivalently regarded as the one composing of a number of \emph{non-uniform} subarrays, with the same phase shift at the antennas of each subarray, as illustrated in Fig.~\ref{nfig.3}(b).
Again, this equivalent array-of-subarrays architecture leads to the grating lobes, while its near-field beam pattern is generally more difficult to characterize. 
This issue can be effectively tackled by the proposed FSE method to obtain useful insights.

\vspace{-5pt}
\subsection{Multi-user Near-field Communications}
The discrete PSs generally introduce undesirable grating lobes, making the design of multi-user communication systems more challenging. 
In particular, users located at the position of grating lobes may experience severe interference, resulting in degraded communication performance. 
In Section \ref{Numerical Results}, we will present numerical results to showcase the impact of grating lobes on the communication performance. 
To mitigate the impact of such interferences, digital beamforming techniques, such as ZF, MMSE, or optimization-based schemes, can be employed to effectively suppress IUI.

On the other hand, low-resolution PSs offer notable advantages in terms of EE by significantly reducing the system power consumption, which is of paramount importance for XL-arrays of a massive number of antennas.
In particular, EE can be defined as $E_{\mathrm{eff}}\!\!=\!\!\frac{R}{P_{\mathrm{total}}}$ in units of bit/Hz/Joule,
where $R$ is the multi-user sum-rate, and $P_{\mathrm{total}} \!\!=\!\!P_{\mathrm{BB}}\!+\!MP_{\mathrm{RF}}\!+\!MNP_{\mathrm{PS}}\!+\!P_{\mathrm{t}}$ represents total power consumption.
Herein, $P_{\mathrm{BB}}$ is the baseband power consumption, $P_\mathrm{t}$ is the transmit power of the XL-array, and $P_{\mathrm{RF}}$ and $P_{\mathrm{PS}}$ denote the power consumption of a RF chain and PS, respectively. 
It can be observed that the power consumption of both RF chains and PSs in $P_{\mathrm{total}}$ is proportional to the number of users, while the consumption of PSs is proportional to the number of antennas.
Therefore, for scenarios of a large number of users and antennas, the massive number of PSs can incur significant power consumption if high-resolution PSs are used. 
This highlights the practical importance of using low-resolution PSs. 

\section{Numerical Results}
\label{Numerical Results}
In this section, we present numerical results to demonstrate the effects of discrete PSs. 
The system parameters are set as follows. 
The BS is equipped with $N = 513$ antennas, the carrier frequency is set as $f = 60$ GHz, and the reference channel gain is $\beta_0=(\lambda/4\pi)^2=-62~\mathrm{dB}$. Moreover, the (reference) signal-to-noise ratio (SNR) is defined as $\mathrm{SNR}=\frac{P_\mathrm{u}\beta_0 N}{r_{\rm u}^2\sigma^2}$, where $P_\mathrm{u}$ is the transmit power for user~${\rm u}$ and the noise power is $\sigma^2=-70~\mathrm{dBm}$~\cite{zhou2024sparse}.
Without specified otherwise, we set the number of antennas as $N=513$.
All numerical results are averaged over 1000 random realizations of user locations.

In Fig.~\ref{fig.10}, we plot the achievable rates for a two-user setup versus the reference SNR under different resolutions of PSs. 
Two cases of user locations are considered, where both users are located at different angles and their ranges are uniformly distributed in $[20, 60]$ m. The key differences between the two cases is that in Case 1, user 2 is located at the angle of grating lobes of user 1, when the MRT beamforming is employed; while in Case 2, user 2 is not significantly affected by the grating lobes of user 1. 
In order to investigate the effect of grating lobes, we first consider the purely analog beamformer based on MRT, for which individual beams are steered towards the two users, respectively.
For 1-bit PSs, one can observe that the sum-rate in Case 2 is much larger than that in Case 1. 
Therefore, it is necessary to employ digital beamforming (e.g., ZF and MMSE) to further eliminate IUI when low-resolution PSs are employed.
Next, we consider a hybrid beamforming scheme, wherein the analog beamformers are designed based on MRT, followed by MMSE-based digital beamformers to further eliminate IUI. 
First, it is observed that there is a small gap between the sum-rates of Case 1 and Case 2.
Moreover, compared to continuous PSs, there is significant performance degradation when employing 1-bit and 2-bit PSs, while the case with 3-bit PSs achieves similar sum-rate performance with the continuous counterpart.

Fig.~\ref{fig.12} shows the achievable sum-rate versus the number of antennas. 
We consider a system setup where transmit power is $P_{\mathrm{t}}=35~\mathrm{dBm}$ and $M=10$ users are randomly distributed in the angular region $(-\frac{\pi}{2},\frac{\pi} {2})$ and distance region $[20, 60]$ m. 
Digital beamforming is designed based on the optimization method in~\cite{shen2018fractional}. 
It is observed that as the number of antennas increases, all schemes achieve increasing sum-rate, while the growth rate is gradually diminishing. 
This is because, when the number of antennas is small, increasing the antenna number makes the channel model gradually shift from far-field to the near-field, thereby achieving the beam-focusing gain.
However, the sum-rate performance gap between low-resolution and continuous PSs gradually diminishes, when the number of antennas further increases. 
This is because the focused regions of the main lobe and Type-I grating lobes become smaller, and the power of Type-II grating lobes decreases progressively. 
Moreover, when the number of antennas is sufficiently large, hybrid beamforming (e.g., MRT+MMSE) achieves very close sum-rate performance to that of digital beamforming due to enhanced channel orthogonality.

\begin{figure}[!t]
  \centering
\includegraphics[width=0.7\columnwidth]{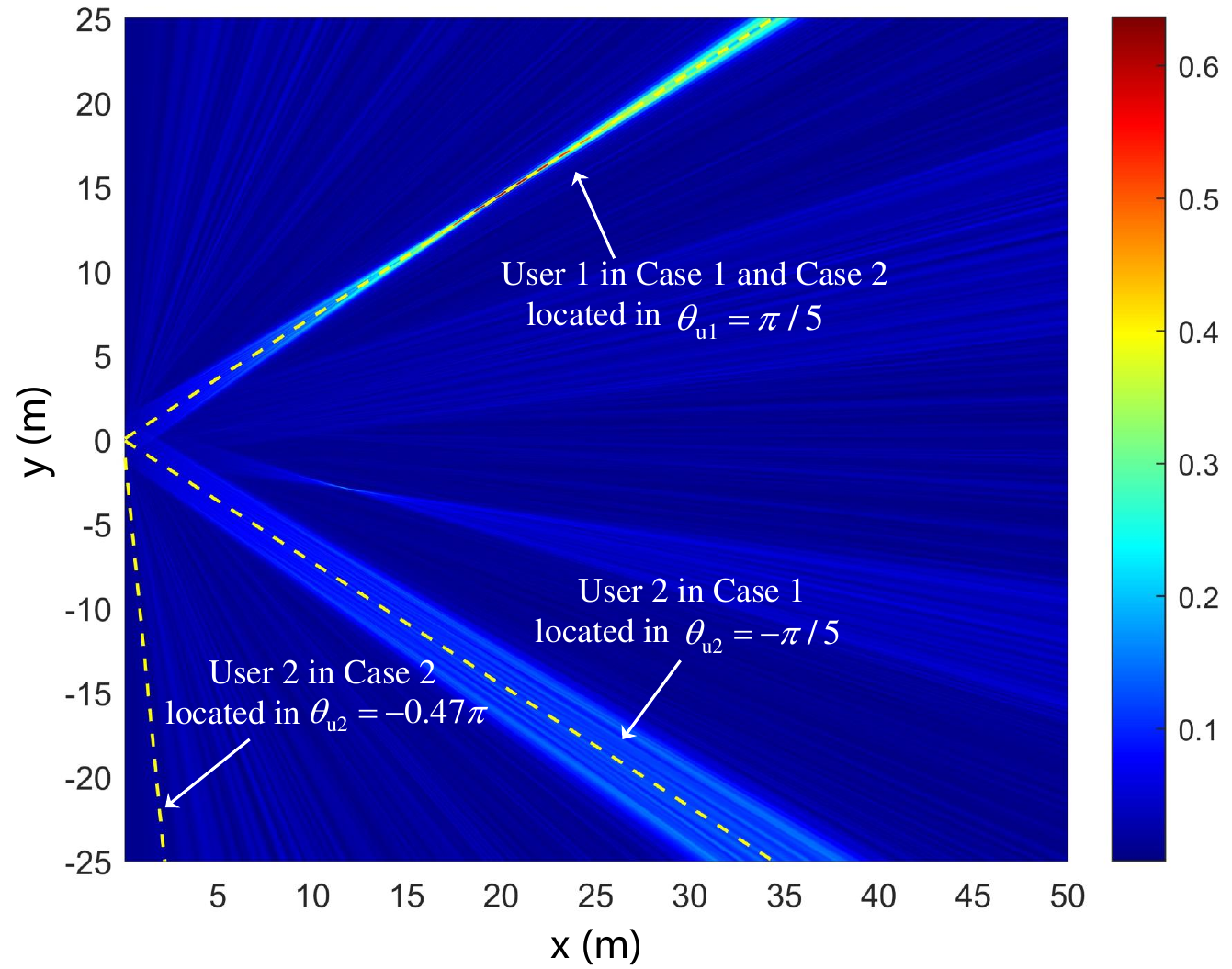} 
\caption{Locations of users for Case 1 and Case 2. The near-field beam pattern refers to the MRT beamforming for user 1 at ($\theta_{\rm u}=\pi/5, r_{\rm u}=25$ m) under 1-bit PSs.}
\label{distribution}
\vspace{-17pt}
\end{figure}

In Fig.~\ref{fig.EE}, we plot EE versus the transmit power for the hybrid beamforming scheme based on MRT and MMSE. We consider $M = 10$ users, and the user locations follow the same distribution as that in Fig.~\ref{fig.12}.
Moreover, the baseband and RF power consumption are set as $P_{\mathrm{BB}}\!=\!200~\mathrm{mW}$ and  $P_{\mathrm{RF}}\!=\!240~\mathrm{mW}$, respectively~\cite{xie2019power}, and the power consumption of 1-bit, 2-bit, 3-bit and 4-bit PSs are set as $5~\mathrm{mW}$, $10~\mathrm{mW}$, $15~\mathrm{mW}$ and $45~\mathrm{mW}$, respectively~\cite{mendez2016hybrid}.
For example, in this setup, compared to the 3-bit PS case, employing 1-bit PSs can reduce the total power consumption by $51.2~\mathrm{W}$, which accounts to $62\%$ of total power consumption, when transmit power is $P_{\mathrm{t}}=35~\mathrm{dBm}$.
As such, it is shown that the EE of the low-resolution PSs is much higher than that of high-resolution PSs, since they can significantly reduce the power consumption while maintaining high beamforming gain.

Last, in Fig. \ref{fig.EE_usernum}, we show the effect of number of users on EE. 
The main observations are summarized as follows.
First, it is observed that the EE of 4-bit PSs decreases with the number of users.
This is due to the fact that 4-bit PSs result in much higher power consumption as the number of users grows, thereby reducing EE.
In contrast, for lower-bit cases, when the number of users increases, the system EE first increases and then gradually decreases.
This is expected, since the lower-resolution PSs incur slight increasing power consumption, while the sum-rate performance are greatly improved when the number of users increases.
However, when the number of users is sufficiently large, the heavy power consumption eventually results in decreased system EE.
\begin{figure}[!t]
\vspace{-10pt}
 \centering
\includegraphics[width=0.7\columnwidth]{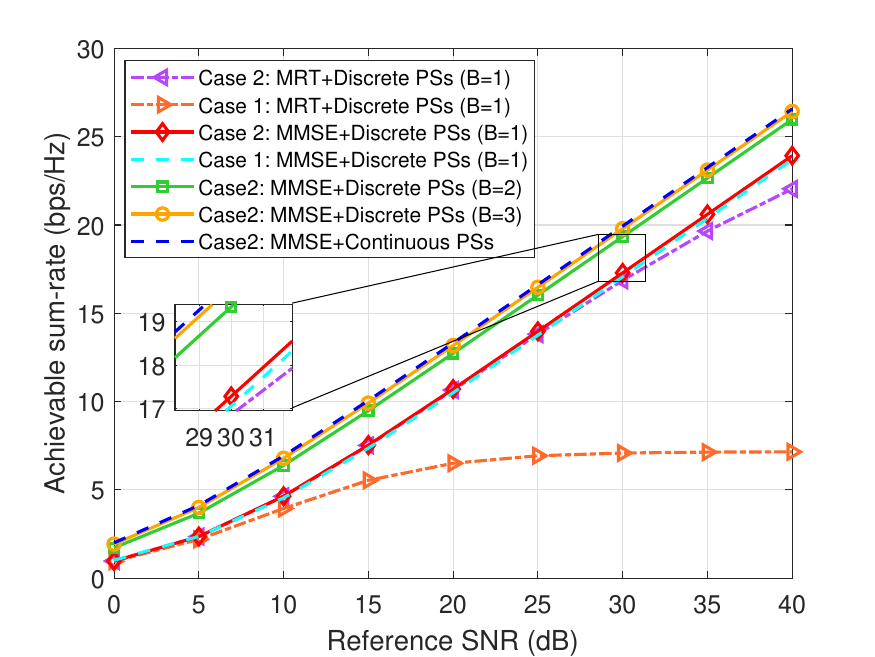}
\caption{Achievable sum-rate versus reference SNR.}
\label{fig.10}
\vspace{-14pt}
\end{figure}
\begin{figure}[!t]
\centering
\includegraphics[width=0.7\columnwidth]{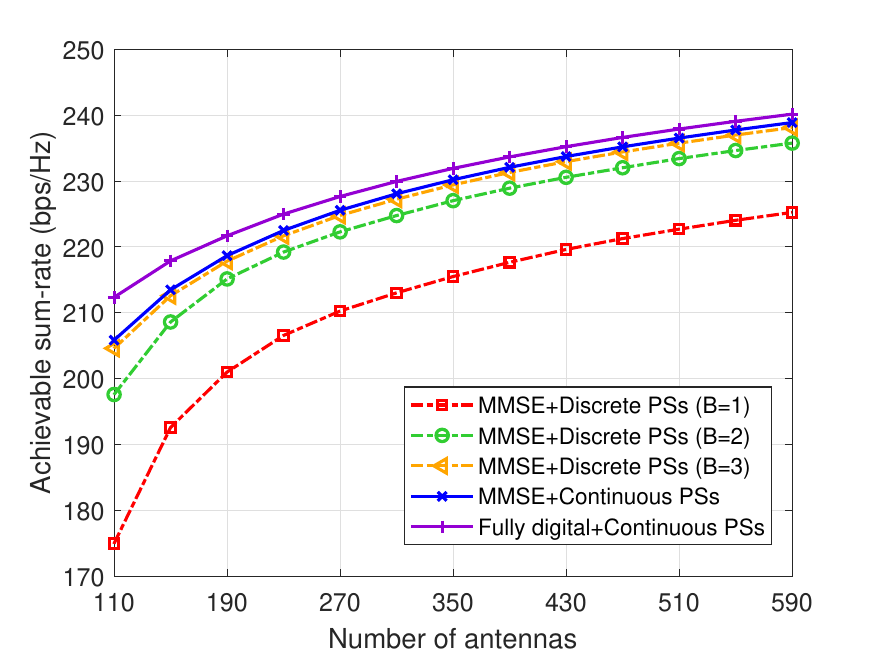}
\caption{Achievable sum-rate versus number of antennas.}
\label{fig.12}
\vspace{-14pt}
\end{figure}

\section{Conclusions}
In this paper, we proposed a new FSE method to characterize the beam pattern under discrete PSs in the near-field.
It was shown that discrete PSs introduce undesired grating lobes, while the main lobe still exhibits the beam focusing effect, with its beam power increasing with PS resolution. 
Moreover, we showed that the grating lobes can be divided into two types, which are characterized by the beam-focusing and beam-steering effects, respectively. 
It was revealed that, the sum-rate performance of near-field is superior over the far-field scenario, due to the lower grating lobe power and smaller spatial region covered by grating lobes.
Numerical results showed that grating lobes generally reduce communication rate, which can be mitigated by MMSE beamforming in digital ends. 
Moreover, 3-bit PSs can achieve similar beam patterns and rate performance with the continuous PS counterpart, while it attains much higher EE.

\section*{Appendix A: Proof of Lemma \ref{fourier coefficients}}
\label{proof of fourier coefficients}
According to NN quantization, the quantized phase takes the midpoint value of the quantization interval. Thus, we have 
\begin{equation*}
\label{neq18}
\!Q_{B}(\varphi)=e^{\jmath\frac{(2\ell+1)\pi}{C_{B}}}, \;\! \text{when } \frac{2\ell\pi}{C_{B}} \!\leq\! \varphi \!<\! \frac{2(\ell+1)\pi}{C_{B}}, \ell \in \mathbb{Z}.
\end{equation*}
Then, the Fourier coefficients in \eqref{eq9} can be obtained as
\begin{equation}
\label{neq19}
a_k^{(B)}=\frac{1}{2\pi}\sum_{\ell=0}^{C_{B}-1}\int_{\frac{2\ell\pi}{C_{B}}}^{\frac{2(\ell+1)\pi}{C_{B}}}e^{\jmath\frac{(2{\ell}+1)\pi}{C_{B}}}e^{-\jmath k\varphi}d\varphi.
\end{equation}
When $k=0$, \eqref{neq19} can be rewritten as
\begin{align*}
a_0^{(B)} &\!=\! \frac{1}{2\pi}\!\!\sum_{\ell=0}^{C_{B}-1}\!\int_{\frac{2\ell\pi}{C_{B}}}^{\frac{2(\ell+1)\pi}{C_{B}}}\!\!\!e^{\jmath\frac{(2{\ell}+1)\pi}{C_{B}}}\!d\varphi \!=\! \frac{1}{C_{B}}\!\!\sum_{\ell=0}^{C_{B}-1}\!e^{\jmath\frac{(2{\ell}+1)\pi}{C_{B}}} \\
&=\! \frac{e^{\jmath\frac{\pi}{C_{B}}}}{C_{B}}\frac{1-e^{\jmath2\pi}}{1-e^{\jmath\frac{2\pi}{C_{B}}} }= 0.
\end{align*}
When $k\in \mathcal{T}$, the expression of $a_k^{(B)}$ is given by
\begin{equation}
a_k^{(B)}\!\!=\!\!\frac{\sin(\frac{k\pi}{C_{B}})}{k\pi}e^{\jmath\frac{(1-k)\pi}{C_{B}}}\!\!\sum_{\ell=0}^{C_{B}-1}\!e^{\jmath\frac{2(1-k)\pi}{C_{B}}\ell}\! \stackrel{(a_1)}{=}\!\!\frac{C_{B}}{k\pi}\!\sin(\frac{\pi}{C_{B}}),\! \label{eq41}
\end{equation}
where $(a_1)$ is obtained by substituting $k=1-pC_{B}$ into \eqref{eq41}.
When $k\neq0$ and $k\notin \mathcal{T}$, \eqref{neq19} is given by
\begin{align*}
a_k^{(B)}&=\frac{\sin(\frac{k\pi}{C_{B}})}{k\pi}e^{\jmath\frac{(1-k)\pi}{C_{B}}}\sum_{\ell=0}^{C_{B}-1}e^{\jmath\frac{2(1-k)\pi}{C_{B}}\ell}\\
&=\frac{\sin(\frac{k\pi}{C_{B}})}{k\pi}e^{\jmath\frac{(1-k)\pi}{C_{B}}}\frac{1-e^{\jmath2(1-k)\pi}}{1-e^{\jmath\frac{2(1-k)\pi}{C_{B}}}}=0.
\end{align*}
Combining the above three cases leads to the desired result.

\begin{figure}[!t]
\vspace{-10pt}
\centering
\includegraphics[width=0.7\columnwidth]{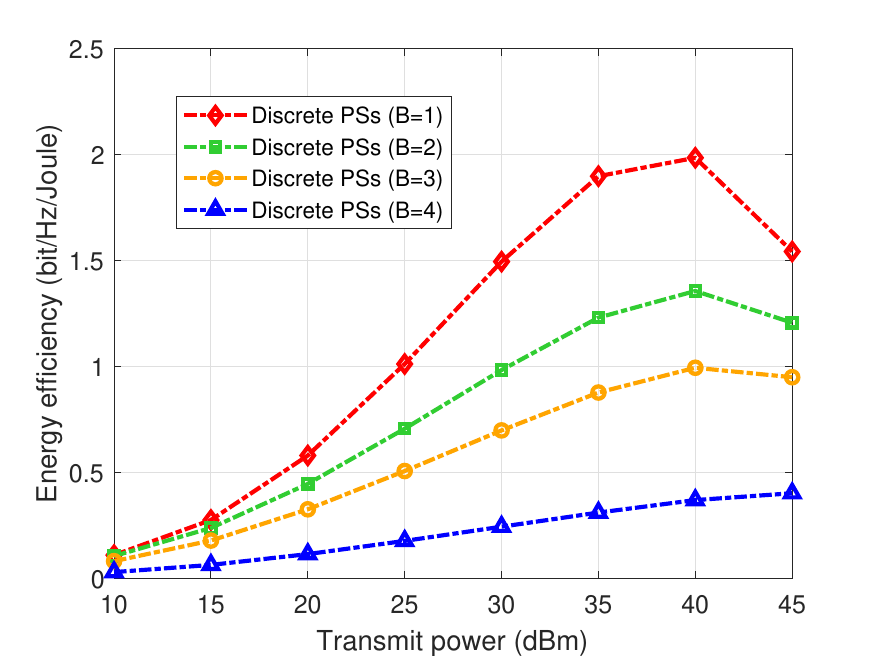}
\caption{EE versus BS transmit power.}
\label{fig.EE}
\vspace{-15pt}
\end{figure}
\begin{figure}[!t]
\centering
\includegraphics[width=0.7\columnwidth]{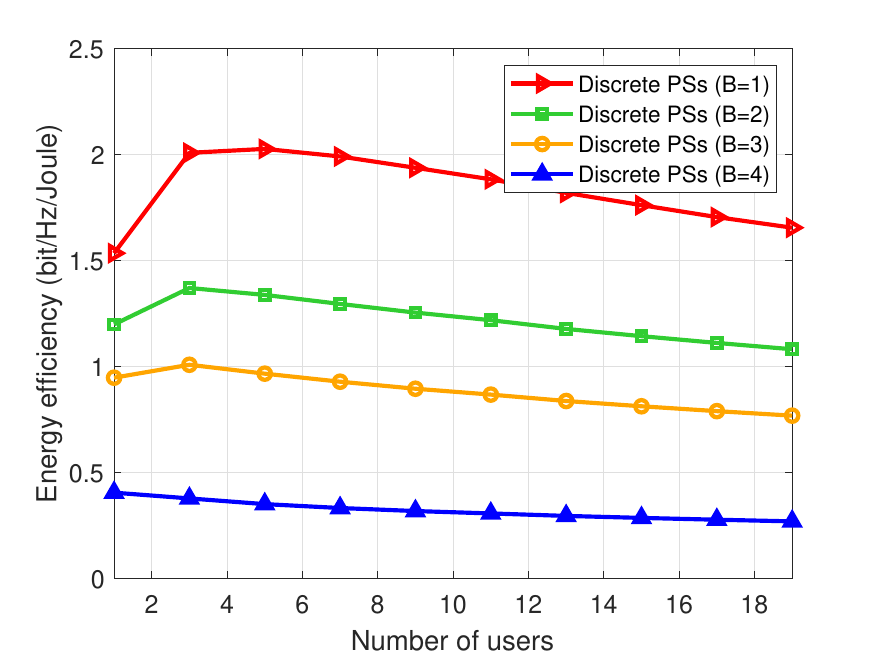}
\caption{EE versus number of users.}
\label{fig.EE_usernum}
\vspace{-16pt}
\end{figure}
\section*{Appendix B: Proof of Property \ref{principle term} }
\rm First, the property of periodicity can be obtained directly from \eqref{neq14}. 
Second, according to the property about the sparsity and considering periodicity, we can conclude that $|f_k^{(B)}(\theta,r;\theta_{\rm u},r_{\rm u})|$ has large values around the angle of $\theta_k$, which is obtained from
$
\label{nnn24}
\Delta_k=k\sin\theta_{\mathrm{u}}-\sin\theta=2m,m\in\mathbb{Z}.
$
\rm 
Then, we prove \eqref{eq23} by contradiction. Assume that there are two distinct values $\Theta_1, \Theta_2 \in \left(-\frac{\pi}{2}, \frac{\pi}{2}\right)$ that both satisfy the condition of $\Delta_k$. That is, $k\sin\theta_{\rm u} - \sin\Theta_1 = 2m,  m \in \mathbb{Z}$, and $k\sin\theta_{\rm u} \!-\! \sin\Theta_2 \!=\! 2m^{\prime},  m^{\prime} \in \mathbb{Z}$. Combining these leads to
$
\sin\Theta_2 - \sin\Theta_1 = 2(m - m^{\prime}),\! \;\! m, m^{\prime} \in \mathbb{Z}.
$
Given $\Theta_1, \Theta_2 \!\!\in\!\! \left(-\frac{\pi}{2}, \frac{\pi}{2}\right)$, it can be easily shown that $\sin\Theta_1 - \sin\Theta_2 \in (-2, 2)$. 
On the other hand, for $m, m^{\prime} \in \mathbb{Z}$, $\sin\Theta_1 - \sin\Theta_2=2(m - m^{\prime})$ is an even integer. Combining the above, we conclude that $m = m^{\prime}$, which is equivalent to $\sin\Theta_1 = \sin\Theta_2$, implying that $\Theta_1 = \Theta_2$. This contradicts the assumption and thus leads to the result in \eqref{eq23}.
Moreover, when $k>0$, we can also conclude that $|f_k^{(B)}(\theta,r;\theta_{\rm u},r_{\rm u})|$ achieves its maximum value $|a_k^{(B)}|$ under the following conditions
\begin{align}
&\!\!\!\!\!\!{\text{Condition 1:}}~~  \Delta_k=k\sin\theta_{\mathrm{u}}-\sin\theta=2m,m\in\mathbb{Z},\label{eq24}\\
&\!\!\!\!\!\!{\text{Condition 2:}}~~\Phi_k=-k\frac{\cos^{2}\theta_{\mathrm{u}}}{r_{\mathrm{u}}}+\frac{\cos^{2}\theta}{r}=0.\label{eq25}
\end{align}
\eqref{eq24} is the same as \eqref{nnn24}, thus \eqref{Eq:MVAngle} can be obtained from \eqref{eq24} similarly. Moreover, \eqref{Eq:MVRange} can be obtained from \eqref{eq25} directly.
On the other hand, when $k\leq 0$, depending on the range of $0\leq-k\frac{\cos^2\theta_{\rm u}}{r_{\rm u}}\!<\!\Phi_k\!\leq\!\infty$, we can conclude that $|f_k^{(B)}(\theta,r;\theta_{\rm u},r_{\rm u})|$ cannot reach a maximum at a specific location. Instead, it has large values around $\theta_k$ due to the angular sparsity.

\section*{Appendix C: Proof of the expressions of beam-width in proposition \ref{Main lobe} and \ref{Type-I grating lobes}}\label{App:PropoBeam}
The beam-widths of main lobe and Type-I grating lobes are defined at the corresponding distance ring, i.e., $\{(r, \theta)|\frac{\cos^2\theta}{r}=k\frac{\cos^2\theta_{\rm u}}{r_{\rm u}} \}$ with $ k \in (\mathcal{G}_1 \cup \{1\})$, where we have $\Phi_k=0$. Then, the beam function of each lobe can be rewritten as
    \begin{align}
    \left|f_k^{(B)}(\theta,r;\theta_{\rm u},r_{\rm u})\right|&=\left|\frac{1}{N}a_{k}^{(B)}\sum_{n=-\tilde{N}}^{\tilde{N}}e^{\jmath \pi n(k\sin\theta_{\rm u}-\sin\theta)}\right|
    \nonumber
    \\&=\left|a_{k}^{(B)}\Xi_{N}(\pi(k\sin\theta_{\rm u}-\sin\theta))\right|,
    \label{nneq23}
    \end{align}
    where $\!\Xi_{N}(\alpha)\!=\!\frac{\sin\frac{N}{2}\alpha}{N\sin\frac{1}{2}\alpha}$ is the Dirichlet sinc function.
    When the array aperture is large, the beam pattern in \eqref{nneq23} can be approximated as~\cite{zhou2024sparse}
    \begin{equation*}
    \left|f_k^{(B)}(\theta,r;\theta_{\rm u},r_{\rm u})\right|=\left|a_{k}^{(B)}\mathrm{sinc}(\frac{N(k\sin\theta_{\rm u}-\sin\theta)}{2})\right|,
    \end{equation*}
    where $\mathrm{sinc}(x) = \frac{\sin(\pi x)}{\pi x}$ denotes the sinc function. To characterize the 3-dB beam-width, we numerically set $\frac{N(k\sin\theta_{\rm u}-\sin\theta)}{2}=\pm0.44$ and thus obtain $ \mathrm{BW}_k^{(B)} = \frac{1.76}{N}$.

\section*{Appendix D: Proof of the expressions of beam-depth in proposition \ref{Main lobe} and \ref{Type-I grating lobes}}
\label{proof of beam-depth}
The function $\left|f_k^{(B)}(\theta,r;\theta_{\rm u},r_{\rm u})\right|$ in the range domain at an angle $\theta_k$ is given as,
\begin{equation*}
\left|\frac1N a_k^{(B)}\!\!\!\sum_{n=-\tilde{N}}^{\tilde{N}}\!\!\!e^{\jmath \frac{\pi\lambda}4n^2(-k\frac{\cos^2\theta_{\rm u}}{r_{\rm u}}+\frac{\cos^2\theta_k}{r})}\right|\!=\!\left| a_k^{(B)} \right| \left| L(x) \right|,
\end{equation*}
where $x = \frac{\lambda}{4}(-k\frac{\cos^2\theta_{\rm u}}{r_{\rm u}}+\frac{\cos^2\theta_k}{r})$ and the function $L(x)$ is
\begin{equation*}
\begin{aligned}L(x)&=\frac{1}N\sum_{n=-\tilde{N}}^{\tilde{N}}e^{\jmath \pi n^2x}&\approx\frac1N\int_{-N/2}^{N/2}e^{\jmath \pi n^2x}\mathrm{d}n.\end{aligned}
\label{eq29}
\end{equation*}
According to the proof of~\cite{cui2022channel}, $|L(x)|$ can be written as
$
|L(x)|\approx\left|\frac{C(\beta)+\jmath S(\beta)}\beta \right|,
$
where $C(\beta)=\int_{0}^{\beta}\cos(\frac\pi2t^2)\mathrm{d}t$ and $S(\beta)=\int_{0}^{\beta}\sin(\frac\pi2t^2)\mathrm{d}t$ are Fresnel functions~\cite{cui2022channel}. Moreover, the $\beta$ is given by
\begin{equation}
\beta=\frac{\sqrt{2x}N}{2}=\sqrt{\frac{N^2\lambda}{8}\left|-k\frac{\cos^2\theta_{\rm u}}{r_{\rm u}}+\frac{\cos^2\theta_k}{r}\right|}.
\label{eq48}
\end{equation}  
Based on numerical results, $\left|\frac{C(\beta)+\jmath S(\beta)}\beta\right|\geq\frac{\sqrt{2}}2\text{ for }\beta\leq1.31$. Therefore, by defining $\eta_{3\mathrm{dB}} = 1.31$, we approximately have $|L(x)|\geq\frac{\sqrt{2}}2$ when 
$
\sqrt{\frac{N^2\lambda}{8}\left|-k\frac{\cos^2\theta_{\rm u}}{r_{\rm u}}+\frac{\cos^2\theta_k}{r}\right|} \leq \eta_{3\mathrm{dB}},
$
which is equivalent to 
\begin{equation}
\!\!\!\!\left(\!\frac{k\cos^2\theta_{\rm u}}{r_{\rm u}\!\cos^2\theta_k}\!\!-\!\!\frac{8\eta_{3\mathrm{dB}}^2}{N^2\lambda\cos^2\theta_k}\!\!\right)^+\!\!\!\!\!\leq\!\frac1{r}\!\!\leq\!\!\left(\!\frac{k\cos^2\theta_{\rm u}}{r_{\rm u}\!\cos^2\theta_k}\!\!+\!\!\frac{8\eta_{3\mathrm{dB}}^2}{N^2\lambda\cos^2\theta_k}\!\!\right)^+\!\!\!\!.
\label{eq31}
\end{equation}

As for $k>0$, if $r_{\rm u}<kr_\mathrm{DF}$, where $r_\mathrm{DF}\triangleq \frac{N^2\lambda \cos^2\theta_{\rm u}}{8\eta^2_{\mathrm{3dB}}}$,~\eqref{eq31} can be deduced to
\begin{equation*}
\!\!\!\!\frac{k\cos^2\theta_{\rm u}}{r_{\rm u}\cos^2\theta_k}\!-\!\frac{8\eta_{3\mathrm{dB}}^2}{N^2\lambda\cos^2\theta_k}\!\!\leq\!\!\frac1{r}\!\!\leq\!\!\frac{k\cos^2\theta_{\rm u}}{r_{\rm u}\cos^2\theta_k}\!+\!\frac{8\eta_{3\mathrm{dB}}^2}{N^2\lambda\cos^2\theta_k}.
\label{eq32}
\end{equation*}
Therefore, the range of $r$ is given by
\begin{equation*}
\frac{\cos^2\theta_k}{\cos^2\theta_{\rm u}}\frac{r_{\rm u}r_\mathrm{DF}}{kr_\mathrm{DF}+r_{\rm u}}\leq r\leq\frac{\cos^2\theta_k}{\cos^2\theta_{\rm u}}\frac{r_{\rm u}r_\mathrm{DF}}{kr_\mathrm{DF}-r_{\rm u}}.
\label{eq33}
\end{equation*}
As such, the beam-depth for the dominant lobes represented by $\left|f_k^{(B)}(\theta,r;\theta_{\rm u},r_{\rm u})\right|,~k>0$ is given by
\begin{equation*}
\begin{aligned}
\mathrm{BD}_k^{(B)}&=\frac{\cos^2\theta_k}{\cos^2\theta_{\rm u}}\frac{r_{\rm u}r_\mathrm{DF}}{kr_\mathrm{DF}-r_{\rm u}}-\frac{\cos^2\theta_k}{\cos^2\theta_{\rm u}}\frac{r_{\rm u}r_\mathrm{DF}}{kr_\mathrm{DF}+r_{\rm u}} \\
&=\frac{\cos^2\theta_k}{\cos^2\theta_{\rm u}}\frac{2r_{\rm u}^2r_\mathrm{DF}}{k^2r_\mathrm{DF}^2-r_{\rm u}^2}.
\end{aligned}
\label{eq34}
\end{equation*}
If $r_{\rm u}\geq kr_\mathrm{DF}$, the range of $r$ is given by
$
\frac{\cos^2\theta_k}{\cos^2\theta_{\rm u}}\frac{r_{\rm u}r_\mathrm{DF}}{kr_\mathrm{DF}+r_{\rm u}}\leq r\leq \infty.
$
As such, $\mathrm{BD}_k^{(B)}=\infty$, thus completing the proof.

\begin{figure}[!t]
\vspace{-10pt}
\centering
\includegraphics[width=0.75\columnwidth]{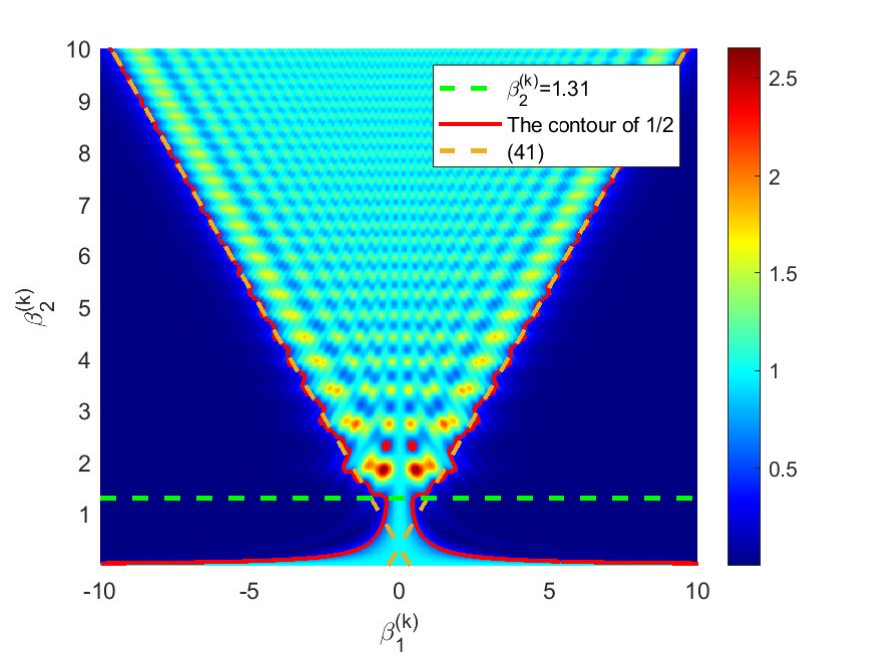}
\caption{The power ratio $\mathcal{R}(\Delta_k,\Phi_{k,0})$ in \eqref{ratio}.} 
\label{fig.ratio}
\vspace{-15pt}
\end{figure}
\section*{Appendix E: Proof of proposition \ref{Type-II grating lobes}}
The beam power of Type-II grating lobe at $\theta_k$ is given by $|f_k^{(B)}(\theta_k,r;\theta_{\rm u},r_{\rm u})|^2=\left| a_k^{(B)} \right|^2 \left| L(x) \right|^2$.
Since we have $k<0$ for Type-II grating lobes, $\beta$ in~\eqref{eq48} decreases with $r$ and $r_{\rm u}$ and increases with $N$. Moreover, $|L(x)|$ increases with $\beta$ (according to the proof of~\cite{cui2022channel}). Thus, $|L(x)|$ increases with $r$ and $r_{\rm u}$, and decreases with $N$. In addition, $|a_k^{(B)}|$ decreases with $B$. Thus, we can conclude that the beam power of Type-II grating lobe increases with $r$, while it decreases with $N$ and $B$.
Therefore, given $r_{\rm u}$, $N$ and $B$, the beam power is upper-bounded by $|f_k^{(B)}(\theta,r;\theta_{\rm u},r_{\rm u})|^2_{r\to\infty}=\left| a_k^{(B)} \right|^2 \left| \frac{C(\beta_{\infty})+\jmath S(\beta_{\infty})}{\beta_{\infty}} \right|^2$, where $\beta_{\infty}=\sqrt{\frac{N^2\lambda}{8}\left|-k\frac{\cos^2\theta_{\rm u}}{r_{\rm u}}\right|}$.

The surrogate beam-width of Type-II grating lobe at $\theta_k$ and $r_0$ is defined on the ring $\Phi_{k,0} = -k\frac{\cos^2\theta_{\rm u}}{r_{\rm u}}+\frac{\cos^2\theta_k}{r_0}$, where the power drops to half of its value at $\theta_k$ on this ring. 
To address this problem, we first define the power ratio as \cite{wu2024near}
\begin{equation*}
\!\mathcal{R}(\Delta_k,\Phi_{k,0})\!\triangleq\!\frac{|f_k^{(B)}(\theta,r;\theta_{\rm u},r_{\rm u})|^2}{|f_k^{(B)}(\theta_k,r_0;\theta_{\rm u},r_{\rm u})|^2}\!=\!\frac{|H(\Delta_k, \Phi_{k,0})|^2}{|H(0,\Phi_{k,0})|^2}.
\end{equation*}
Then, we need to solve the equation of $\mathcal{R}(\Delta_k,\Phi_{k,0})=\frac{1}{2}$ to obtain the solution $\Delta_k^*$, for which the surrogate beam-width is given by $\mathcal{A}^{(B)}_k(r_0)=2\Delta_k^*$.
According to the proof of Lemma 2 in \cite{zhou2024sparse}, $|H(\Delta, \Phi)|$ can be written as
$
|H(\Delta, \Phi)| \approx \left|\frac{\widehat{C}(\beta_{1},\beta_{2}) +\jmath \widehat{S}(\beta_{1},\beta_{2})}{2\beta_2} \right|,
$
where $\widehat{C}(\beta_{1},\beta_{2}) \triangleq C(\beta_1+\beta_2)-C(\beta_1-\beta_2)$ and $\widehat{S}(\beta_{1},\beta_{2})\triangleq S(\beta_1+\beta_2)-S(\beta_1-\beta_2)$. $\beta_1$ and $\beta_2$ are defined as $\beta_1=\frac{\Delta}{\sqrt{d\left|\Phi\right|}},\beta_2=\frac{N}{2}\sqrt{d\left|\Phi\right|}$. Then, the power ratio is rewritten as
\begin{equation}
\!\mathcal{R}(\Delta_k,\Phi_{k,0}) \!=\! \frac{[\widehat{C}(\beta_1^{(k)},\beta_2^{(k)})]^{2}+[\widehat{S}(\beta_1^{(k)},\beta_2^{(k)})]^{2}}{4[C^{2}(\beta_{2}^{(k)})+S^{2}(\beta_{2}^{(k)})]},
\label{ratio}
\end{equation}
where $\beta_1^{(k)}=\frac{\Delta_k}{\sqrt{d\left|\Phi_{k,0}\right|}},\beta_2^{(k)}=\frac{N}{2}\sqrt{d\left|\Phi_{k,0}\right|}$.

To solve $\mathcal{R}(\Delta_k,\Phi_{k,0})=\frac{1}{2}$, we first give some useful approximations. When $\beta$ is small, we have $C(\beta)\approx\beta$ and $S(\beta)\approx0$.
Moreover, when $\beta>1$, we have the following approximations
\begin{align*}
C(\beta)\approx \tilde{C}(\beta)\triangleq\frac{1}{2}+\frac{\sin\left(\frac{\pi}{2}\beta^2\right)}{\pi\beta}, \\
S(\beta)\approx \tilde{S}(\beta)\triangleq\frac{1}{2}-\frac{\cos\left(\frac{\pi}{2}\beta^2\right)}{\pi\beta}.
\end{align*}

Note that, in the effective near-field region, we have $r_{\rm u}<r_\mathrm{DF}$. Thus, $\beta_2^{(k)}=\frac{N}{2}\sqrt{d\left|\Phi_{k,0}\right|}>\frac{N}{2}\sqrt{d\left|k\frac{\cos^2\theta_{\rm u}}{r_{\rm u}}\right|}>\frac{N}{2}\sqrt{d\left|\frac{8\eta_{3\mathrm{dB}}^2}{N^2\lambda}\right|}>\eta_{3\mathrm{dB}}=1.31>1$. Without loss of generality, we can assume $\beta_1^{(k)}>0$, thus we have $\beta_1^{(k)}+\beta_2^{(k)}>1$. 
Moreover, as presented by the contour of $\mathcal{R}(\Delta_k,\Phi_{k,0})=\frac{1}{2}$ in Fig.~\ref{fig.ratio}, we can conclude that $\delta = \beta_1^{(k)}-\beta_2^{(k)}\approx0$. 
Thus, we can use the approximations above to simplify \eqref{ratio}. After neglecting the higher-order oscillatory terms, we obtain $\delta\approx\frac12-\frac12\sqrt{3+\frac{8}{\pi^2(\beta_2^{(k)})^2}}\approx\frac{1-\sqrt{3}}{2}$. Thus, we have
\begin{equation}
\beta_1^{(k)}=\beta_2^{(k)}+\frac{1-\sqrt{3}}{2},
\label{eq61}
\end{equation}
which is accurate for $\beta_2^{(k)}\!\!>\!\!1.31$, as shown in Fig. \ref{fig.ratio}. As such, the surrogate beam-width is given by $\mathcal{A}^{(B)}_k(r_0)=2\Delta_k^*=Nd|\Phi_{k,0}|+(1-\sqrt{3})\sqrt{d|\Phi_{k,0}|}$, from which we can conclude that $\mathcal{A}^{(B)}_k(r_0)$ increases with $N$ and $|\Phi_{k,0}|$, which means that $\mathcal{A}^{(B)}_k(r_0)$ decreases with $r_0$, thus completing the proof.

\bibliographystyle{IEEEtran}

\bibliography{IEEEabrv}
\end{document}